\DeclareMathOperator{\Tr}{Tr}
\DeclareMathOperator{\wt}{wt}
\newtheorem{proposition}{Proposition}
\newtheorem{theorem}[proposition]{Theorem}
\newtheorem{lemma}[proposition]{Lemma}
\newtheorem{corollary}[proposition]{Corollary}
\newtheorem{conjecture}[proposition]{Conjecture}
\theoremstyle{definition}
\newtheorem{example}[proposition]{Example}
\pgfplotsset{every axis legend/.append style={at={(1,0)},anchor=south east}}
\begin{document}
\nocite{*}
\title{Quantum Data-Syndrome Codes:\\ Subsystem and Impure Code Constructions}

\author{Andrew Nemec
\thanks{A. Nemec is with the Duke Quantum Center, Duke University, Durham, NC 27701, USA and the Department of Electrical and Computer Engineering, Duke University, Durham, NC 27708, USA (email: andrew.nemec@duke.edu).}}

\maketitle

\begin{abstract}
Quantum error correction requires the use of error syndromes derived from measurements that may be unreliable. Recently, quantum data-syndrome (QDS) codes have been proposed as a possible approach to protect against both data and syndrome errors, in which a set of linearly dependent stabilizer measurements are performed to increase redundancy. Motivated by wanting to reduce the total number of measurements performed, we introduce QDS subsystem codes, and show that they can outperform similar QDS stabilizer codes derived from them. We also give a construction of single-error-correcting QDS stabilizer codes from impure stabilizer codes, and show that any such code must satisfy a variant of the quantum Hamming bound for QDS codes. Finally, we use this bound to prove a new bound that applies to impure, but not pure, stabilizer codes that may be of independent interest.

\end{abstract}

\section{Introduction}

Active quantum error correction plays an important role in the protection of quantum information from external and internal noise. Unforturnately, since the physical gates that perform the error-correction procedure are themselves imperfect, we must design these procedures so that they are fault tolerant. Multiple fault-tolerant error-correction procedures have been developed, including those by Shor \cite{DiVincenzo1996, Shor1996}, Steane \cite{Steane1997}, and more recently the flag fault-tolerant scheme of Chao and Reichardt \cite{Chao2018}. In such a system, the failure of individual components during the error-correction procedure can be corrected.

The most well studied class of quantum error-correcting codes are the stabilizer codes \cite{Calderbank1998, Gottesman1997}. The error-correction procedure on these codes involves measuring the generators of the stabilizer group of the code, resulting in a classical error syndrome that provides information on what errors may have occured on the code. Because the gates that perform measurements may themselves be faulty, the syndrome may be incorrect, leading to the incorrect recovery operation being applied to the code. While some codes such as the 7-qubit Steane code have the ability to protect against syndrome errors \cite{Fujiwara2014}, the standard theory of stabilizer codes does not provide a way to protect against these syndrome measurement errors.

In Shor-style syndrome extraction and its improvements \cite{Delfosse2020, Tansuwannont2022}, syndrome measurement errors are protected against by repeated measurement of each stabilizer generator. This effectively means that each classical bit in the syndrome is protected by it's own classical repetition code. While this extra redundancy is useful for protecting against other faulty-gate errors, it is inefficient when looking at syndrome measurement errors. In a series of papers, Ashikhmin et al. \cite{Ashikhmin2014, Ashikhmin2016, Ashikhmin2020} and Fujiwara \cite{Fujiwara2014} developed the theory of quantum data-syndrome (QDS) codes, in which extra redundancy is introduced more efficiently by means of measuring an over-determined set of stabilizer elements. The most efficient way to construct, encode, and decode QDS codes is to use syndrome-measurement (SM) codes, where the syndrome of a stabilizer code is encoded in a classical linear code.

More recent work on QDS codes has included more efficient decoding techniques \cite{Kuo2021, Raveendran2022}, connections to the theory of 2-designs \cite{Premakumar2021}, and an extension to quantum convolutional codes \cite{Zeng2019}. Recently, Wagner et al. \cite{Wagner2022} showed how QDS codes can be used to estimate a logical channel with a phenomenological Pauli noise model. Parallel to the development of QDS codes, a similar theory of single-shot error correction \cite{Bombin2015, Brown2016, Campbell2019} has been investigated for use on topological quantum codes.

This paper extends quantum data-syndrome codes from stabilizer codes to subsystem codes, showing how they can provided an advantage over gauge-fixed versions of themselves. We then give a new construction of QDS codes from impure distance-3 stabilizer codes, as well as show that any such code must satisfy the QDS variant of the Hamming bound. Finally, we give a new bound on impure stabilizer codes that bounds them away from the quantum Hamming bound.

\section{Background}

\subsection{Quantum Stabilizer Codes}


An $\left(\!\left(n,K\right)\!\right)$ quantum code is a $K$-dimensional subspace of the Hilbert space $\mathcal{H}=\mathbb{C}^{2^{n}}$ that represents $n$ physical qubits. One nice class of quantum code are the stabilizer codes \cite{Calderbank1998, Gottesman1997}, which may be defined using a stabilizer formalism. Let $X_{j}$ be the Pauli-$X$ operator acting on the $j$-th physical qubit, and similarly for $Z_{j}$. Then the Pauli error group $G_{n}$ on $n$ qubits is generated by \begin{equation} \left\langle i, Z_{1}, \dots,Z_{n}, X_{1},\dots, X_{n}\right\rangle. \end{equation} To define a quantum stabilizer code, we pick a set of $2n$ independent operators $\left\{Z_{1}',\dots, Z_{n}',X_{1}', \dots, X_{n}'\right\}$ from $G_{n}$ that satisfy the same commutation relations as the single-qubit Pauli operators:
\begin{align}
\left[X_{i}',X_{j}'\right] & = 0, i,j\in\left[n\right], \label{comrel1} \\
\left[Z_{i}',Z_{j}'\right] & = 0, i,j\in\left[n\right], \label{comrel2} \\
\left[X_{i}',Z_{j}'\right] & = 0, i,j\in\left[n\right], i\neq j, \label{comrel3} \\
\left\{X_{i}',Z_{i}'\right\} & = 0, i\in\left[n\right], \label{comrel4} 
\end{align}
where the commutator $\left[A,B\right]=AB-BA$ and the anticommutator $\left\{A,B\right\}=AB+BA$ each equal $0$ if and only if $A$ and $B$ commute or anticommute respectively. These operators, together with the imaginary unit $i$, also generate the Pauli error group $G_{n}$ and can be thought of as single-qubit Pauli operators that act on $n$ virtual qubits.

A stabilizer code $\mathcal{C}$ is defined by its stabilizer group $\mathcal{S}$, which is an abelian subgroup of $G_{n}$ generated by $n-k$ independent, commuting elements $S_{i}$ and does not contain scalar multiples of the identity. The coding subspace $\mathcal{C}$ is the joint $+1$-eigenspace of the operators in $\mathcal{S}$ of dimension $K=2^{k}$. The centralizer $C_{G_{n}}\!\!\left(\mathcal{S}\right)$ of $\mathcal{S}$ in $G_{n}$ is the set of elements in $G_{n}$ that commute with every element of $\mathcal{S}$. Following \cite{Ashikhmin2020}, we will use the notation $m=n-k$ for the number of measurements made by a standard stabilizer code in order to simplify expressions.

Without loss of generality, we can choose our stabilizer generators such that $S_{i}=Z_{i}'$, for all $i\in\left[m\right]$, which means that \begin{equation} C_{G_{n}}\!\!\left(\mathcal{S}\right) = \left\langle i, Z_{1}', \dots, Z_{n}', X_{m+1}', \dots, X_{n}'\right\rangle \end{equation} by the commutation relations in Eqs. \ref{comrel1}-\ref{comrel4}. The stabilizer code $\mathcal{C}$ is therefore the $k$ virtual qubits indexed by $\left\{m+1, \dots, n\right\}$ and has the logical Pauli operators $\left\{X_{j}', Z_{j}'\right\}_{j=m+1,\dots,n}$.

The weight $\wt\!\left(E\right)$ of a Pauli operator $E$ is the number of non-identity tensor components it has. We define the minimum distance $d$ of a stabilizer code as \begin{equation} d=\min\wt\!\left(C_{G_{n}}\!\!\left(\mathcal{S}\right)\setminus\left\langle i, \mathcal{S}\right\rangle\right). \end{equation} We call a stabilizer code of dimension $2^{k}$ on $n$ physical qubits and a minimum distance of $d$ an $\left[\!\left[n,k,d\right]\!\right]$ stabilizer code.

Binary stabilizer codes have a nice connection to additive codes over $\mathbb{F}_{4}=\left\{0,1,\omega,\omega^{2}=\omega+1\right\}$ \cite{Calderbank1998}, using the homomorphism $\tau:G_{n}\rightarrow\mathbb{F}_{4}^{n}$ by corresponding the tensor components $\left\{I,X,Z,Y\right\}$ with $\left\{0,1,\omega,\omega^{2}\right\}$ respectively. We define the trace inner product of two vectors $\mathbf{u},\mathbf{v}\in\mathbb{F}_{4}^{n}$ by \begin{equation}\label{traceinnerprod}\mathbf{u}*\mathbf{v}=\Tr_{\mathbb{F}_{2}}^{\mathbb{F}_{4}}\!\left(\sum_{i=1}^{n}u_{i}\overline{v_{i}}\right)=\sum_{i=1}^{n}\left(u_{i}\overline{v_{i}}+\overline{u_{i}}v_{i}\right), \end{equation} where $\overline{v_{i}}$ is conjugation in $\mathbb{F}_{4}$ defined by $\overline{0}=0$, $\overline{1}=1$, $\overline{\omega}=\omega^{2}$, and $\overline{\omega^{2}}=\omega$. Notice that the trace inner product captures the commutation relationship between any two elements of $E,F\in G_{n}$, which either commute or anticommute with each other: $E$ and $F$ commute with each other if $\tau\!\left(E\right)*\tau\!\left(F\right)=0$ and anticommute with each other if $\tau\!\left(E\right)*\tau\!\left(F\right)=1$. In this way we can associate the stabilizer group $\mathcal{S}$ with an additive $\left(n,2^{m}\right)_{4}$ code $C$ and its centralizer $C_{G_{n}}\!\!\left(\mathcal{S}\right)$ with the dual $\left(n,2^{n+k}\right)_{4}$ code $C^{\perp}$. Since $\mathcal{S}\leq C_{G_{n}}\!\!\left(\mathcal{S}\right)$, we have $C\subseteq C^{\perp}$. The minimum distance of the stabilizer code may therefore also be written in terms of $C$ and $C^{\perp}$: \begin{equation*} d=\min\wt\!\left(C^{\perp}\setminus C\right). \end{equation*}

Let $\mathbf{g}_{i}=\tau\!\left(S_{i}\right), i\in\left[m\right]$, and let \begin{equation*}G_{C}=\begin{bmatrix} \mathbf{g}_{1} \\ \vdots \\ \mathbf{g}_{m} \end{bmatrix}\end{equation*} be the generator matrix of $C$. Suppose an encoded vector is affected by a Pauli error $E\in G_{n}$, and denote the corresponding $\mathbb{F}_{4}$ error-vector as $\mathbf{e}=\tau\!\left(E\right)$. The syndrome associated with this error is $\mathbf{s}\in\mathbb{F}_{2}^{m}$, where each element of $\mathbf{s}$ is given by $s_{i}=\mathbf{g}_{i}*\mathbf{e}$. The syndrome $\mathbf{s}$ shows whether the error commutes or anticommutes with the stabilizer generators, and is obtained by measuring the stabilizer generators $S_{i}$.

\subsection{Subsystem Codes}

Here we review the basics of subsystem codes using the stabilizer formalism of Poulin \cite{Poulin2005}. Instead of encoding using the entire coding subspace, we impose a subsystem structure on $\mathcal{C}=A\otimes B$ and encode our quantum message only in the subsystem $A$, while we may encode an arbitrary state into the gauge subsystem $B$. Upon recovery, we only demand that subsystem $A$ is correctable \cite{Kribs2013}.

We define a gauge group $\mathcal{G}\subseteq G_{n}$ such that $\mathcal{G}$ is generated by the imaginary unit $i$, $m-r$ independent, commuting elements $S_{1},\dots, S_{m-r}$, and $r$ pairs of operators $G_{j},H_{j}$ that satisfy the following commutation relations: \begin{align}
\left[S_{i},G_{j}\right] & = 0, i\in\left[m-r\right], j\in\left[r\right], \\
\left[S_{i},H_{j}\right] & = 0, i\in\left[m-r\right], j\in\left[r\right], \\
\left[G_{i},H_{j}\right] & = 0, i\in\left[m-r\right], j\in\left[r\right], i\neq j, \\
\left\{G_{i},H_{i}\right\} & = 0, i\in\left[r\right].
\end{align}


As with stabilizer codes, we let $\mathcal{S}=\left\langle S_{1}, \dots, S_{m-r}\right\rangle$ be the stabilizer group of the subsystem code, and let $C_{G_{n}}\!\!\left(\mathcal{S}\right)$ be its centralizer in $G_{n}$. Without loss of generality, we can make the following associations: \begin{align}S_{i} & = Z_{i}', i\in\left[m-r\right], \\ G_{i} & = Z_{m-r+i}', i\in\left[r\right], \\ H_{i} & = X_{m-r+i}', i\in\left[r\right], \end{align} so \begin{equation} \mathcal{G} = \left\langle i, Z_{1}', \dots, Z_{m}', X_{m-r+1}', \dots, X_{m}'\right\rangle,  \end{equation} and \begin{equation} C_{G_{n}}\!\!\left(\mathcal{S}\right) = \left\langle i, Z_{1}', \dots, Z_{n}', X_{m-r+1}', \dots, X_{n}' \right\rangle.\end{equation} Note that this gives $\mathcal{S}\subseteq\mathcal{G}\subseteq C_{G_{n}}\!\!\left(\mathcal{S}\right)$, and we define the minimum distance of the subsystem code to be \begin{equation} d=\min\wt\!\left(C_{G_{n}}\!\!\left(\mathcal{S}\right)\setminus\mathcal{G}\right). \end{equation} The coding subsystem $A$ is then the $k$ virtual qubits indexed by $\left\{m+1, \dots,n\right\}$, and the guage subsystem in the $r$ virtual qubits indexed by $\left\{m-r+1, \dots, m\right\}$.

Similar to the stabilizer case, we can associate an $\left(n, 2^{m-r}\right)_{4}$ classical additive code $C$ with the stabilizer group $\mathcal{S}$, as well as associate its dual $\left(n, 2^{n+k+r}\right)_{4}$ code $C^{\perp}$ with $C_{G_{n}}\!\!\left(\mathcal{S}\right)$. In the same way, we now associate an $\left(n,2^{m+r}\right)_{4}$ classical additive code $D$ with the gauge group $\mathcal{G}$, which satisfies $D\cap D^{\perp}=C$. The minimum distance of the subsystem code can therefore also be given in terms of the associated classical codes: \begin{equation} d=\min\wt\!\left(C^{\perp}\setminus D\right). \end{equation}

\subsection{Quantum Data-Syndrome Codes}

Much of the notation used in this and the following sections are taken from Ashikhmin et al. \cite{Ashikhmin2020}.

There are multiple protocols for extracting the syndrome of a stabilizer code \cite{Chao2018, Steane1997}, and here we assume that Shor's syndrome extraction protocol \cite{DiVincenzo1996, Shor1996} is used. The benefit of a Shor-style syndrome extraction scheme is that it makes use of transversal gates, so a single error to one of the ancilla qubits will propagate to at most a single data qubit. This is accomplished for a weight $w$ stabilizer generator by using a $w$-qubit ancilla cat state and performing $w$ measurements.

While all the gates involved in syndrome extraction have the potential to be faulty, for simplicity we assume that they perform perfectly, and that the only errors that occur are bit-flip errors on the syndrome caused by faulty measurements with probability, with $p_{m}$ the probability of a single-qubit measurement error. In particular, this means that we do not have to worry about errors propagating to the data qubits. The probability of incorrectly measuring a stabilizer generator $S_{i}$ of weight $w_{i}$ is given by \begin{equation} p_{err}\!\left(S_{i}\right)=\sum\limits_{j\text{ odd}}\binom{w_{i}}{j}p_{m}^{i}\left(1-p_{m}\right)^{w_{i}-j}. \end{equation}

We want to design stabilizer codes that can correct errors to the syndrome $\mathbf{s}\in\mathbb{F}_{n}^{m}$ as well as data errors. Following \cite{Ashikhmin2020}, define an inner product for vectors $u,v\in\mathbb{F}_{4}^{n}\times\mathbb{F}_{2}^{a}$, \begin{equation}\label{starinnerprod} u\star v=\Tr_{\mathbb{F}_{2}}^{\mathbb{F}_{4}}\!\left(\sum\limits_{i=1}^{n}u_{i}\overline{v_{i}}\right)+\sum\limits_{j=1}^{a}u_{n+j}v_{n+j}. \end{equation} Let $G_{C}$ generate the classical code $C$ associated with a stabilizer code, and define the matrix \begin{equation} \hat{G} = \begin{bmatrix} G_{C} & I_{m}\end{bmatrix}, \end{equation} which generates a code $\hat{C}\in\mathbb{F}_{4}^{n}\times\mathbb{F}_{2}^{m}$. By decoding the dual code $\hat{C}^{\perp}$ with regards to the $\star$-inner product in Eq. \ref{starinnerprod}, we can correct both data and syndrome errors for the stabilizer code associated with the code $C$. The minimum distance of the code is defined as \begin{equation} d=\min\wt\!\left(\hat{C}^{\perp}\setminus \left\{\left(c,\mathbf{0}\right)\mid c\in C\right\}\right).\end{equation} For example, Fujiwara \cite{Fujiwara2014} showed that the $\left[\!\left[7,1,3\right]\!\right]$ Steane code has a nonstandard choice of stabilizer generators that allows it to correct either a single data error or a single syndrome error, even though the standard choice of stabilizer generators cannot.

Not all stabilizer codes are able to correct measurement errors (see Section \ref{impureqdscodes}), but such codes can be augmented by redundant measurements. We call such a code with $l$ extra measurements beyond the $m$ required an $\left[\!\left[n,k,d:l\right]\!\right]$ quantum data-syndrome (QDS) code, and Fujiwara showed that by adding a single stabilizer measurement, any distance-3 stabilizer code can correct either a single data or syndrome error.

\begin{theorem}[{\cite[Theorem 1]{Fujiwara2014}}]
\label{fujiwarathm}
Any $\left[\!\left[n,k,3\right]\!\right]$ stabilizer code may be turned into an $\left[\!\left[n,k,3:1\right]\!\right]$ QDS code.
\end{theorem}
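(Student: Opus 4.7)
The plan is to construct the redundant measurement explicitly and then verify the QDS distance by a short case analysis. Let $\mathbf{g}_1, \ldots, \mathbf{g}_m \in \mathbb{F}_4^{n}$ be the images $\tau(S_i)$ of the original stabilizer generators, and take
\[
\mathbf{g}_{m+1} \,:=\, \mathbf{g}_1 + \mathbf{g}_2 + \cdots + \mathbf{g}_m
\]
as the single extra measurement. Because $C$ is additively closed, $\mathbf{g}_{m+1} \in C$, so it corresponds to a genuine stabilizer element (a product of existing generators). The key algebraic observation is that, for every Pauli error $\mathbf{e}$, the new syndrome bit is the parity of the original syndrome:
\[
\mathbf{g}_{m+1} * \mathbf{e} \,=\, \sum_{i=1}^{m} \mathbf{g}_i * \mathbf{e} \,\equiv\, \wt(\mathbf{s}(\mathbf{e})) \pmod 2.
\]

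Next I would verify that the augmented code $\hat{C}'$, generated by appending $\mathbf{g}_{m+1}$ as an extra row of $G_C$ together with the matching extra column of the identity, has $\star$-distance at least $3$. Writing $\hat{\mathbf{s}}(\mathbf{e}) = (\mathbf{s}(\mathbf{e}), \mathbf{g}_{m+1} * \mathbf{e})$, I need $\wt(\mathbf{e}) + \wt(\hat{\mathbf{s}}(\mathbf{e})) \geq 3$ for every $\mathbf{e} \notin C$. If $\wt(\mathbf{e}) = 1$, the distance-$3$ hypothesis rules out $\mathbf{e} \in C^{\perp} \setminus C$, so $\wt(\mathbf{s}(\mathbf{e})) \geq 1$; when the original syndrome weight is exactly $1$ the parity bit raises it to $2$, and otherwise it is already at least $2$. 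If $\wt(\mathbf{e}) = 2$, the same distance-$3$ argument gives $\wt(\mathbf{s}(\mathbf{e})) \geq 1$, so the combined weight is already at least $3$. The case $\wt(\mathbf{e}) \geq 3$ is immediate.

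The main conceptual step, rather than a computational obstacle, is identifying what blocks the unaugmented code from already being QDS-distance $3$: it is precisely the weight-$1$ Pauli errors whose syndromes have weight $1$, since these are $\star$-indistinguishable from single syndrome-bit errors. A redundant measurement whose syndrome bit flips exactly when the original syndrome has odd weight disambiguates all such errors simultaneously, and $\sum_i \mathbf{g}_i$ is the cheapest choice that does so. Once this observation is made, the verification uses nothing beyond the definition of the $\star$-distance and the distance-$3$ hypothesis; no additional structural assumption on the code (such as purity, CSS form, or weight distribution of the generators) is required.
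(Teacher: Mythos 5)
Your construction and verification are correct, and they follow exactly the route the paper takes (and attributes to Fujiwara): append the product of all $m$ stabilizer generators as a parity check so that every extended data-error syndrome has even weight, reserving the weight-$1$ syndromes for single measurement errors. Your case analysis on $\wt(\mathbf{e})$ is just a more explicit write-up of the same argument the paper sketches for Theorem~\ref{fujiwarathm} and reuses in the proof of Theorem~\ref{qdssubsystemd3}.
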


The extra stabilizer measurement in Theorem \ref{fujiwarathm} is the product of the $m$ stabilizer generators, and acts as a parity bit for the syndrome, effectively encoding the syndrome in a $\left[m+1,m,2\right]$ classical linear code. This guarantees that every data-syndrome error will have an even weight, leaving the weight-1 errors free for syndrome measurement errors. In \cite{Ashikhmin2020}, Ashikhmin, Lai, and Brun generalize this idea from parity check codes to binary linear codes.

Let $B$ be an $\left[m+l,m\right]$ binary linear code with generator matrix $G_{B}$ in systematic form \begin{equation}G_{B}=\begin{bmatrix}I_{m} & A\end{bmatrix}, \end{equation} and use it to pick $l$ codewords $\mathbf{b}_{j}$, $j\in\left[l\right]$ from our code $C$. In particular, we pick \begin{equation}\mathbf{b}_{j}=a_{1,j}\mathbf{g}_{1}+\cdots a_{m,j}\mathbf{g}_{m}.\end{equation} Crucially, each of these codewords is associated with an element of the stabilizer group, so such measurements do not collapse any superposition in the encoding subspace. We call the code $B$ the syndrome measurement (SM) code. If we let \begin{equation} A'^{T}=\begin{bmatrix} \mathbf{b}_{1}^{T} & \cdots & \mathbf{b}_{l}^{T} \end{bmatrix},\end{equation} we see that measuring the additional $l$ measurements results in a QDS code $\hat{C}$ generated by \begin{equation}\label{qdsgenmat} \hat{G}=\begin{bmatrix} G_{C} & I_{m} & \mathbf{0} \\ A' & \mathbf{0} & I_{l} \end{bmatrix}=\begin{bmatrix} G_{C} & I_{m} & \mathbf{0} \\ \mathbf{0} & A & I_{l} \end{bmatrix}.\end{equation}

While we can decode this code as a QDS code, it is often easier to first use a classical decoder to decode the SM code and then decode the stabilizer code. While this is usually suboptimal, it is typically much simpler, and so we take this approach for the remainder of the paper. After decoding the SM code and recovering the putative syndrome $\hat{\mathbf{s}}$, the probability of a syndrome error is given by \begin{equation}p_{se}=\Pr\!\left(\mathbf{s}\neq\hat{\mathbf{s}}\right),\end{equation} where $\mathbf{s}$ is the correct syndrome.

\section{QDS Subsystem Codes}

Here we extend the concept of QDS to subsystem codes. Starting with an $\left[\!\left[n,k,r,d'\right]\!\right]$ subsystem code defined by a pair of classical additive $\mathbb{F}_{4}$ codes $C, D$ such that $C$ is an $\left(n,2^{m-r}\right)_{4}$ code satisfying $C=D\cap D^{\perp}$ (where $D^{\perp}$ is the dual code of $D$ with respect to the $*$-inner product in Eq. \ref{traceinnerprod}), we pick a classical $\left[m-r+l,m-r\right]$ binary code to be our SM code.  Note that because only the stabilizer generators of the subsystem code will be measured and not the gauge operators, we have $r$ fewer measurements that need to be protected.

The generator matrix of the resultant QDS subsystem code $\hat{C}$ is identical to the one in (\ref{qdsgenmat}), and we identify the code by the parameters $\left[\!\left[n,k,r,d:l\right]\!\right]$, where the minimum distance of the code is \begin{equation}d=\min\wt\!\left(\hat{C}^{\perp}\setminus\left\{\left(c,\mathbf{0}\right)\mid c\in D\right\}\right)\leq d'.\end{equation} It is straightforward to see such a code can correct any combination of up to $t$ data and measurement errors, where $t=\left\lfloor\left(d-1\right)/2\right\rfloor$.

Much of the intuition for QDS stabilizer codes transfers to QDS subsystem codes. As an example, we generalize the result in Theorem \ref{fujiwarathm} to QDS subsystem codes:

\begin{theorem}\label{qdssubsystemd3}
Any $\left[\!\left[n,d,r,3\right]\!\right]$ subsystem code may be turned into an $\left[\!\left[n,k,r,3:1\right]\!\right]$ QDS subsystem code.
\end{theorem}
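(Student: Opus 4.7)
The plan is to mirror the construction underlying Theorem~\ref{fujiwarathm} in the subsystem setting: take the single extra measurement to be the product of all $m-r$ stabilizer generators, which amounts to choosing the SM code $B$ to be the $\left[m-r+1, m-r, 2\right]$ binary parity-check code. With $A=\mathbf{1}^{T}$ in (\ref{qdsgenmat}), the resulting QDS subsystem code $\hat{C}$ is generated by
\begin{equation*}
\hat{G}=\begin{bmatrix} G_{C} & I_{m-r} & \mathbf{0} \\ \mathbf{0} & \mathbf{1}^{T} & 1 \end{bmatrix}.
\end{equation*}

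Next, I would characterize the dual $\hat{C}^{\perp}$ with respect to the $\star$-inner product. A vector $\left(u,v\right)\in\mathbb{F}_{4}^{n}\times\mathbb{F}_{2}^{m-r+1}$ lies in $\hat{C}^{\perp}$ exactly when $v_{i}=\mathbf{g}_{i}*u$ for $i\in\left[m-r\right]$ (so that the first $m-r$ bits of $v$ record the syndrome of $u$ against the stabilizer generators) and the parity constraint $v_{m-r+1}=\sum_{i=1}^{m-r}v_{i}$ holds. Consequently $\wt\!\left(v\right)$ must be even, so $\wt\!\left(v\right)\in\left\{0,2,4,\dots\right\}$.

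The final step is a short case analysis verifying $d\geq 3$ for any $\left(u,v\right)\in\hat{C}^{\perp}\setminus\left\{\left(c,\mathbf{0}\right)\mid c\in D\right\}$. If $v=\mathbf{0}$, then $u\in C^{\perp}$ but $u\notin D$, so $u$ lies in $C^{\perp}\setminus D$ and has weight at least $3$ by the distance-$3$ hypothesis on the underlying subsystem code. If $v\neq\mathbf{0}$, then $\wt\!\left(v\right)\geq 2$ by the parity constraint, and $u\neq\mathbf{0}$ (otherwise the syndrome would vanish, forcing $v=\mathbf{0}$), giving $\wt\!\left(\left(u,v\right)\right)\geq 1+2=3$.

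I do not anticipate a substantive obstacle. The only place where the subsystem structure could interact nontrivially with Fujiwara's original argument is the $v=\mathbf{0}$ case, since $D$ may contain low-weight gauge elements; but the inclusion $D\subseteq C^{\perp}$, which follows from $\mathcal{G}\subseteq C_{G_{n}}\!\left(\mathcal{S}\right)$, ensures that any such element corresponds precisely to one of the excluded codewords $\left(c,\mathbf{0}\right)$, so the distance argument transfers verbatim.
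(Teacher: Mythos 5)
Your proposal is correct and follows essentially the same route as the paper: the single extra measurement is the stabilizer element associated with $\mathbf{g}_{1}+\cdots+\mathbf{g}_{m-r}$, acting as a parity bit that forces every data-error syndrome to have even weight. Your dual-code case analysis is simply a more explicit write-up of the paper's one-line argument, and your closing observation that $D\subseteq C^{\perp}$ handles the gauge elements correctly.
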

\begin{proof}
Let $\mathbf{g}_{1},\dots,\mathbf{g}_{m-r}$ be the codewords of $C$ associated with the stabilizer generators of the subsystem code. The additional stabilizer measurement will be the one associated with the codeword $\mathbf{g}_{1}+\cdots+\mathbf{g}_{m-r}$. It is easy to see that any syndrome due to a data error will have an even weight, and as a result all the weight-1 syndromes are available for the errors that occur due to a single faulty measurement.
\end{proof}

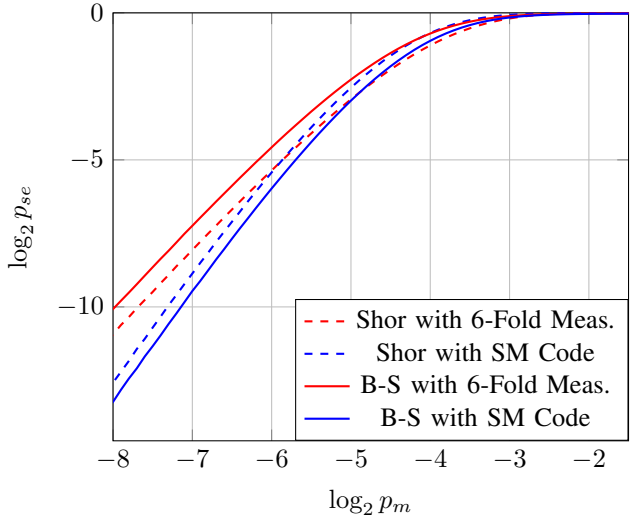
\begin{figure}[t]
\centering
\begin{tikzpicture}
\begin{axis}[
xlabel = {$\log_{2} p_{m}$},
xmajorgrids=true,
xmax = -1.5,
xmin = -8.0,
ylabel = {$\log_{2} p_{se}$},
ymajorgrids=true,
ymax = 0,
]
\addplot[mark = none, red, dashed, thick] table {
-1.5	-0.001055018
-1.6	-0.001595342
-1.7	-0.002391727
-1.8	-0.003696057
-1.9	-0.005567868
-2	-0.008362289
-2.1	-0.012268204
-2.2	-0.017799726
-2.3	-0.025265701
-2.4	-0.035520213
-2.5	-0.048861544
-2.6	-0.066114816
-2.7	-0.088090952
-2.8	-0.115506292
-2.9	-0.149251824
-3	-0.190335138
-3.1	-0.239068063
-3.2	-0.296357173
-3.3	-0.362343823
-3.4	-0.43859588
-3.5	-0.524307581
-3.6	-0.620113703
-3.7	-0.726196104
-3.8	-0.842463278
-3.9	-0.969368283
-4	-1.105477353
-4.1	-1.252153036
-4.2	-1.406830911
-4.3	-1.572588926
-4.4	-1.745131056
-4.5	-1.926664316
-4.6	-2.114137955
-4.7	-2.311934368
-4.8	-2.514726467
-4.9	-2.72436787
-5	-2.941235438
-5.1	-3.162140497
-5.2	-3.388097013
-5.3	-3.619963256
-5.4	-3.858513256
-5.5	-4.098249736
-5.6	-4.338978999
-5.7	-4.591686226
-5.8	-4.843181122
-5.9	-5.096458417
-6	-5.361886764
-6.1	-5.617617718
-6.2	-5.885296497
-6.3	-6.144891333
-6.4	-6.414503801
-6.5	-6.68724771
-6.6	-6.957033543
-6.7	-7.234634992
-6.8	-7.504622061
-6.9	-7.78348926
-7	-8.063906737
-7.1	-8.3505041
-7.2	-8.61698975
-7.3	-8.915337005
-7.4	-9.193700692
-7.5	-9.490496699
-7.6	-9.764297445
-7.7	-10.04317015
-7.8	-10.34527118
-7.9	-10.61546385
-8	-10.93787829
-8.1	-11.16864157
-8.2	-11.49362148
-8.3	-11.80047527
-8.4	-12.11428508
-8.5	-12.36795519
-8.6	-12.61741021
-8.7	-12.91702497
-8.8	-13.22108497
-8.9	-13.63093002
-9	-13.83897658
-9.1	-14.17397021
-9.2	-14.43097562
-9.3	-14.76258313
-9.4	-14.96939554
-9.5	-15.29449091
-9.6	-15.60964047
-9.7	-15.99861568
-9.8	-16.18337572
-9.9	-16.35697504
-10	-16.86817349
};
\addlegendentry{Shor with 6-Fold Meas.}
\addplot[mark = none, blue, dashed, thick] table {
-1.5	-0.001824754
-1.6	-0.002055614
-1.7	-0.002413347
-1.8	-0.002860836
-1.9	-0.003500468
-2	-0.004302258
-2.1	-0.005454556
-2.2	-0.006979154
-2.3	-0.009080187
-2.4	-0.011912167
-2.5	-0.015763506
-2.6	-0.0210183
-2.7	-0.028253793
-2.8	-0.037863011
-2.9	-0.050665728
-3	-0.067577999
-3.1	-0.089241319
-3.2	-0.117148142
-3.3	-0.152157202
-3.4	-0.195401841
-3.5	-0.248267031
-3.6	-0.311132035
-3.7	-0.385857408
-3.8	-0.472459032
-3.9	-0.572316063
-4	-0.684984522
-4.1	-0.812084619
-4.2	-0.95250698
-4.3	-1.106485009
-4.4	-1.274109865
-4.5	-1.456056437
-4.6	-1.648479262
-4.7	-1.85639985
-4.8	-2.076042205
-4.9	-2.30520392
-5	-2.546299709
-5.1	-2.798042353
-5.2	-3.057372898
-5.3	-3.328220904
-5.4	-3.606995743
-5.5	-3.893206142
-5.6	-4.18707673
-5.7	-4.489841911
-5.8	-4.797775069
-5.9	-5.111767202
-6	-5.427183074
-6.1	-5.75209071
-6.2	-6.08097271
-6.3	-6.414748989
-6.4	-6.761445457
-6.5	-7.103061701
-6.6	-7.445204888
-6.7	-7.790671413
-6.8	-8.154891373
-6.9	-8.501789988
-7	-8.865463608
-7.1	-9.223441161
-7.2	-9.589354974
-7.3	-9.945693286
-7.4	-10.33024753
-7.5	-10.70359546
-7.6	-11.08681377
-7.7	-11.43082003
-7.8	-11.83629518
-7.9	-12.2060958
-8	-12.57415402
-8.1	-12.97120456
-8.2	-13.36587144
-8.3	-13.77280111
-8.4	-14.23242994
-8.5	-14.54624539
-8.6	-14.89367848
-8.7	-15.23677541
-8.8	-15.70274988
-8.9	-15.98028385
-9	-16.57072149
-9.1	-16.69675114
-9.2	-17.20876104
-9.3	-17.66712597
-9.4	-17.80228555
-9.5	-18.28771238
-9.6	-18.90920076
-9.7	-19.28771238
-9.8	-19.56524636
-9.9	-20.70274988
-10	-19.96578428
};
\addlegendentry{Shor with SM Code}
\addplot[mark = none, red, thick] table {
-1.5	-0.020451791
-1.6	-0.020585209
-1.7	-0.020986621
-1.8	-0.021672042
-1.9	-0.022622011
-2	-0.024039715
-2.1	-0.026008586
-2.2	-0.02867919
-2.3	-0.032326215
-2.4	-0.037018823
-2.5	-0.043239311
-2.6	-0.051068884
-2.7	-0.061327294
-2.8	-0.0742278
-2.9	-0.090555031
-3	-0.110913771
-3.1	-0.135875156
-3.2	-0.166742082
-3.3	-0.203639821
-3.4	-0.247940746
-3.5	-0.300347043
-3.6	-0.361486396
-3.7	-0.431430102
-3.8	-0.511724812
-3.9	-0.601949909
-4	-0.702966823
-4.1	-0.814805217
-4.2	-0.936595211
-4.3	-1.069637339
-4.4	-1.212353493
-4.5	-1.365110556
-4.6	-1.528358435
-4.7	-1.701183532
-4.8	-1.883103605
-4.9	-2.070583925
-5	-2.267812219
-5.1	-2.472746961
-5.2	-2.684333787
-5.3	-2.902204244
-5.4	-3.126681398
-5.5	-3.354670129
-5.6	-3.591621682
-5.7	-3.830207252
-5.8	-4.071578508
-5.9	-4.319297576
-6	-4.57298658
-6.1	-4.831096468
-6.2	-5.085261369
-6.3	-5.349014918
-6.4	-5.610956398
-6.5	-5.879827246
-6.6	-6.14766151
-6.7	-6.416991269
-6.8	-6.691808069
-6.9	-6.963574034
-7	-7.240946458
-7.1	-7.513594533
-7.2	-7.805345566
-7.3	-8.091848079
-7.4	-8.36497933
-7.5	-8.650340325
-7.6	-8.937850651
-7.7	-9.228097523
-7.8	-9.513295923
-7.9	-9.790609023
-8	-10.07302899
-8.1	-10.35424301
-8.2	-10.64403654
-8.3	-10.98051157
-8.4	-11.24619011
-8.5	-11.52383609
-8.6	-11.8358831
-8.7	-12.10705726
-8.8	-12.43721297
-8.9	-12.73504328
-9	-13.00462403
-9.1	-13.33701069
-9.2	-13.61953651
-9.3	-13.85642373
-9.4	-14.1258247
-9.5	-14.51556279
-9.6	-14.73619636
-9.7	-15.00601413
-9.8	-15.51160839
-9.9	-15.74236173
-10	-15.93368344
};
\addlegendentry{B-S with 6-Fold Meas.}
\addplot[mark = none, blue, thick] table {
-1.5	-0.030128826
-1.6	-0.030403869
-1.7	-0.030996813
-1.8	-0.031976959
-1.9	-0.033285119
-2	-0.035320388
-2.1	-0.038186254
-2.2	-0.042172114
-2.3	-0.04729481
-2.4	-0.054057598
-2.5	-0.062808738
-2.6	-0.0739937
-2.7	-0.088418086
-2.8	-0.106548085
-2.9	-0.129365164
-3	-0.157580157
-3.1	-0.19249417
-3.2	-0.234480105
-3.3	-0.284947349
-3.4	-0.34510504
-3.5	-0.415640491
-3.6	-0.497542729
-3.7	-0.591576367
-3.8	-0.69755014
-3.9	-0.817279183
-4	-0.950823805
-4.1	-1.096749584
-4.2	-1.256574543
-4.3	-1.430241466
-4.4	-1.616401728
-4.5	-1.815629242
-4.6	-2.027271325
-4.7	-2.24977319
-4.8	-2.484384928
-4.9	-2.730830073
-5	-2.984832884
-5.1	-3.250424973
-5.2	-3.524741796
-5.3	-3.806778951
-5.4	-4.094937206
-5.5	-4.392865255
-5.6	-4.698960272
-5.7	-5.008811424
-5.8	-5.325759928
-5.9	-5.648523184
-6	-5.973015854
-6.1	-6.306373435
-6.2	-6.643869715
-6.3	-6.989025428
-6.4	-7.329101667
-6.5	-7.680012191
-6.6	-8.039559463
-6.7	-8.387283276
-6.8	-8.754966853
-6.9	-9.117912617
-7	-9.460369675
-7.1	-9.840009217
-7.2	-10.21920278
-7.3	-10.58290956
-7.4	-10.96849188
-7.5	-11.33549531
-7.6	-11.68646157
-7.7	-12.08902179
-7.8	-12.43571354
-7.9	-12.81945491
-8	-13.23026211
-8.1	-13.56088116
-8.2	-13.94288388
-8.3	-14.40409156
-8.4	-14.73804321
-8.5	-15.11458495
-8.6	-15.44976914
-8.7	-16.01867923
-8.8	-16.35036799
-8.9	-16.94857499
-9	-17.05505162
-9.1	-17.57401656
-9.2	-17.63694782
-9.3	-18.39240976
-9.4	-18.33165573
-9.5	-19.15020886
-9.6	-19.15020886
-9.7	-20.28771238
-9.8	-20.76164357
-9.9	-20.68364106
-10	-21.12421365
};
\addlegendentry{B-S with SM Code}
\end{axis}
\end{tikzpicture}
\caption{Probability of syndrome error $p_{se}$ as a function of single-qubit measurement error $p_{m}$ for Shor and Bacon-Shor codes protected by 6-fold measurements and SM codes.}
\label{smvsfoldplot}
\end{figure}

Using Shor's $l$-fold repeated syndrome extraction is equivalent to encoding each qubit individually in a length $l$ repetition code, which is much less efficient than encoding the syndrome bits together in a single classical code. In Fig. \ref{smvsfoldplot}, we show how the 9-qubit Shor stabilizer code \cite{Shor1995} and the 9-qubit Bacon-Shor subsystem code \cite{Bacon2006} compare using both a 6-fold repeated syndrome extraction and SM codes to protect against syndrome measurement errors. For the $X$-stabilizer generators for both the Shor and Bacon-Shor codes as well as the $Z$-generators for the Bacon-Shor code, we encode them in the optimal $\left[12,2,8\right]$ Cordaro-Wagner code \cite{Cordaro1967}. For the $Z$-stabilizer generators for the Shor code, we encode them in the optimal $\left[18,6,8\right]$ code found in Grassl's table of classical linear codes \cite{GrasslONLINE}. In each case, a total of 144 measurements were done.

Since subsystem codes typically have heavier stabilizer generators than their gauge-fixed stabilizer codes, each corresponding syndrome bit has a higher probability $p_{err}$ of being erroneous. Therefore, when an $l$-fold measurement scheme is used, the subsystem code will experience a higher effective error rate. On the other hand, when a SM code is used the subsystem code outperforms the stabilizer code. This is because even though in the stabilizer case multiple low-weight stabilizer elements are measured, the high-weight products of these elements are also measured. Since these elements are often just as heavy as the subsystem code's stabilizer generator, no advantage is gained. In addition, the stabilizer case requires a longer SM code to acheive the same minimum distance, which increases the number of syndrome bits on which a potential error might occur.

\begin{figure}
\centering
\begin{tikzpicture}
\begin{axis}[
xlabel = {$\log_{2} p_{m}$},
xmajorgrids=true,
xmax = -1.5,
xmin = -7,
ylabel = {$\log_{2} p_{se}$},
ymajorgrids=true,
ymax = 0,
]
\addplot[mark = none, red, dashed, thick] table {
-1.5	-0.005037931
-1.6	-0.005881374
-1.7	-0.006892345
-1.8	-0.008278061
-1.9	-0.010314578
-2	-0.012685956
-2.1	-0.016153891
-2.2	-0.020758029
-2.3	-0.027261114
-2.4	-0.035732496
-2.5	-0.047304466
-2.6	-0.062724766
-2.7	-0.083675933
-2.8	-0.11117427
-2.9	-0.146779037
-3	-0.192228431
-3.1	-0.250227091
-3.2	-0.321537887
-3.3	-0.408683741
-3.4	-0.513895861
-3.5	-0.637443321
-3.6	-0.780293232
-3.7	-0.9462902
-3.8	-1.132387039
-3.9	-1.339531965
-4	-1.570242618
-4.1	-1.819814337
-4.2	-2.094523939
-4.3	-2.387567878
-4.4	-2.695903354
-4.5	-3.029976962
-4.6	-3.37600149
-4.7	-3.743959301
-4.8	-4.119907149
-4.9	-4.519851229
-5	-4.918379156
-5.1	-5.346962919
-5.2	-5.785253477
-5.3	-6.216396674
-5.4	-6.677581472
-5.5	-7.13739785
-5.6	-7.610510281
-5.7	-8.095465806
-5.8	-8.589072818
-5.9	-9.081567966
-6	-9.583830429
-6.1	-10.10404599
-6.2	-10.61608286
-6.3	-11.13485716
-6.4	-11.65799149
-6.5	-12.21665231
-6.6	-12.75964903
-6.7	-13.28630419
-6.8	-13.86228664
-6.9	-14.37422454
-7	-14.99861568
};
\addlegendentry{204 Meas. Shor}
\addplot[mark = none, blue, dashed, thick] table {
-1.5	-0.003306769
-1.6	-0.003725946
-1.7	-0.004266197
-1.8	-0.005073842
-1.9	-0.006159518
-2	-0.007619312
-2.1	-0.009667663
-2.2	-0.012314754
-2.3	-0.016085219
-2.4	-0.021398212
-2.5	-0.029007477
-2.6	-0.039197276
-2.7	-0.053392235
-2.8	-0.072444532
-2.9	-0.098240687
-3	-0.132710757
-3.1	-0.176059936
-3.2	-0.232472979
-3.3	-0.302762385
-3.4	-0.388748669
-3.5	-0.494063034
-3.6	-0.616655562
-3.7	-0.762048682
-3.8	-0.925709725
-3.9	-1.113442537
-4	-1.32435482
-4.1	-1.555660916
-4.2	-1.81030193
-4.3	-2.084661957
-4.4	-2.380179125
-4.5	-2.693158144
-4.6	-3.025739153
-4.7	-3.376182875
-4.8	-3.745334157
-4.9	-4.124809715
-5	-4.520847061
-5.1	-4.933919423
-5.2	-5.353100051
-5.3	-5.790016444
-5.4	-6.243574443
-5.5	-6.700901905
-5.6	-7.166417658
-5.7	-7.637451561
-5.8	-8.122664081
-5.9	-8.613873324
-6	-9.120153107
-6.1	-9.620733934
-6.2	-10.13669584
-6.3	-10.64832645
-6.4	-11.18350678
-6.5	-11.71283507
-6.6	-12.24181624
-6.7	-12.76180012
-6.8	-13.34109847
-6.9	-13.84104259
-7	-14.39228857
};
\addlegendentry{216 Meas. Shor}
\addplot[mark = none, red, thick] table {
-1.5	-0.068708868
-1.6	-0.06939937
-1.7	-0.070589843
-1.8	-0.072885297
-1.9	-0.076123949
-2	-0.081223804
-2.1	-0.087683306
-2.2	-0.096479867
-2.3	-0.108314221
-2.4	-0.124161218
-2.5	-0.144181184
-2.6	-0.169625251
-2.7	-0.201848365
-2.8	-0.243150165
-2.9	-0.292978595
-3	-0.354749398
-3.1	-0.428492768
-3.2	-0.517614408
-3.3	-0.623512343
-3.4	-0.746703893
-3.5	-0.887335727
-3.6	-1.04683349
-3.7	-1.228397496
-3.8	-1.428661153
-3.9	-1.651620259
-4	-1.891943972
-4.1	-2.155823274
-4.2	-2.438786302
-4.3	-2.739585863
-4.4	-3.055053496
-4.5	-3.396071055
-4.6	-3.749923713
-4.7	-4.119189114
-4.8	-4.499483455
-4.9	-4.897704687
-5	-5.310396675
-5.1	-5.730739393
-5.2	-6.173709823
-5.3	-6.607242966
-5.4	-7.074547391
-5.5	-7.535473919
-5.6	-8.012775196
-5.7	-8.488560299
-5.8	-8.975578187
-5.9	-9.47495747
-6	-9.980739322
-6.1	-10.49485703
-6.2	-11.00416096
-6.3	-11.53777593
-6.4	-12.05026338
-6.5	-12.56388081
-6.6	-13.06815661
-6.7	-13.66536766
-6.8	-14.22248476
-6.9	-14.79971161
-7	-15.30403468

};
\addlegendentry{204 Meas. B-S}
\addplot[mark = none, blue, thick] table {
-1.5	-0.037689749
-1.6	-0.038380839
-1.7	-0.038825384
-1.8	-0.040196467
-1.9	-0.042196346
-2	-0.045363967
-2.1	-0.049479795
-2.2	-0.055644243
-2.3	-0.063426601
-2.4	-0.07386581
-2.5	-0.08754222
-2.6	-0.105654645
-2.7	-0.128733446
-2.8	-0.158515911
-2.9	-0.196758495
-3	-0.24414408
-3.1	-0.302526403
-3.2	-0.374318039
-3.3	-0.462267214
-3.4	-0.563646907
-3.5	-0.685039609
-3.6	-0.824682996
-3.7	-0.985401005
-3.8	-1.163714287
-3.9	-1.367389767
-4	-1.58915149
-4.1	-1.834073594
-4.2	-2.096959801
-4.3	-2.382774406
-4.4	-2.684520717
-4.5	-3.007166323
-4.6	-3.345454474
-4.7	-3.699559942
-4.8	-4.068453088
-4.9	-4.466396785
-5	-4.850155349
-5.1	-5.272663274
-5.2	-5.703524972
-5.3	-6.152195466
-5.4	-6.576260493
-5.5	-7.045383787
-5.6	-7.509889962
-5.7	-7.986410539
-5.8	-8.479230792
-5.9	-8.959531149
-6	-9.453083321
-6.1	-9.96321677
-6.2	-10.49134824
-6.3	-10.99529871
-6.4	-11.52217763
-6.5	-12.04835253
-6.6	-12.5925537
-6.7	-13.11338586
-6.8	-13.69734989
-6.9	-14.20194783
-7	-14.72333421
};
\addlegendentry{216 Meas. B-S}
\end{axis}
\end{tikzpicture}
\caption{Probability of a syndrome error for Shor and Bacon-Shor codes protected by the same SM code, but with the original stabilizer generators permuted, giving a different number of measurements.}
\label{diffmeasplot}
\end{figure}
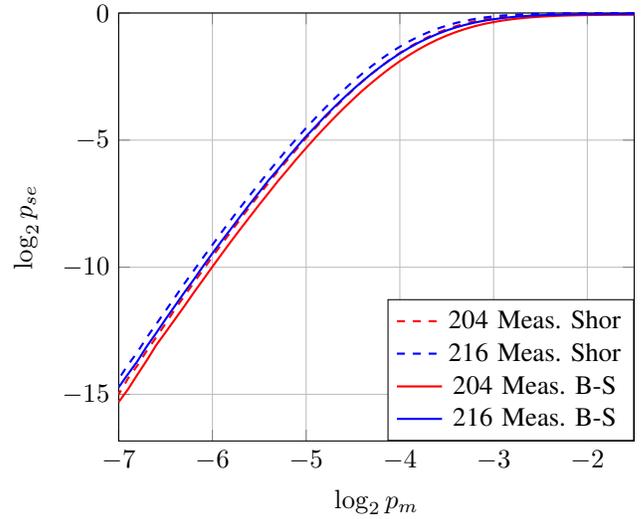

One interesting thing to keep in mind when choosing a SM code is that starting with different stabilizer generators, and in fact even permutations of the same set of stabilizer generators can produce a noticible difference in the probability of a syndrome error. This is due to stabilizer of different weights being measured in our SM scheme, resulting in a different number of total measurements being done. Ashikhmin et al. \cite{Ashikhmin2020} and Premakumar et al. \cite{Premakumar2021} sidestep this issue by choosing stabilizer code whose non-identity stabilizer elements are all of the same weight or of almost the same weights. While this is the case for the stabilizer group of the 9-qubit Bacon-Shor code, it is not true for the Shor code, which has $Z$-stabilizer generators that range from weight-2 to weight-6.

In Fig. \ref{diffmeasplot}, we use a $\left[25,6,11\right]$ SM code to protect the syndrome of the $Z$-stabilizers; however, by permuting our originally chosen stabilizer generators, we end up with two different QDS codes, one that does 102 total measurements for the $Z$-type syndrome, and another that makes 108 total measurements. We also protect the $X$-type syndrome and the $Z$-type syndromes of the Bacon-Shor code with $\left[17,2,11\right]$ and $\left[18,2,12\right]$ Cordaro-Wagner codes, so each protection scheme makes 204 or 216 total measurements.

In each case, the Bacon-Shor code maintains an advantage over the Shor code with the same number of measurements, but in both cases the SM codes that perform fewer total measurements have a noticeable advantage over the ones that perform more total measurements. In both cases, this happens because there is an increase in the total number of syndrome measurements occurring, without a corresponding increase in the minimum distance of the SM codes.

\section{Impure QDS Codes and New Bounds}\label{impureqdscodes}

In \cite{Fujiwara2014}, Fujiwara showed that the $\left[\!\left[7,1,3\right]\!\right]$ Steane code \cite{Steane1996} has a nonstandard choice of stabilizer generators that make it a $\left[\!\left[7,1,3:0\right]\!\right]$ QDS code (although this new set of stabilizer generators no longer maintain the CSS property), and asked more generally when an $\left[\!\left[n,k,d\right]\!\right]$ has a choice of stabilizer generators that make it an $\left[\!\left[n,k,d:0\right]\!\right]$ QDS code, so that it requires only $n-k$ total measurements. Intuitively, this should be more likely when the code has a larger number of available syndromes; conversely, it should be less likely when a code has fewer available syndromes for the syndrome measurement errors, such as the perfect stabilizer codes \cite{Li2013} which have no available syndromes, and are therefore more likely to need to make use of the construction from Theorem \ref{fujiwarathm} to protect against a faulty syndrome measurement.

\subsection{Impure QDS Codes}

A natural place to test this intuition is with impure quantum codes, as they make use of fewer syndromes than a pure code with the same parameters. Indeed we find this intuition to be correct and show in this section that any impure $\left[\!\left[n,k,3\right]\!\right]$ stabilizer code will always have a choice of stabilizer generators that make it an $\left[\!\left[n,k,3:0\right]\!\right]$ QDS code.

Recall that a code is pure if trace-orthogonal correctable errors map the coding subspace to orthogonal subspaces, and is impure otherwise. When restricted to stabilizer codes, this is the equivalent to the code being  nondegenerate and degenerate respectively. In particular, an impure code will have correctable errors that share syndromes, meaning there should be more freely-available syndromes for syndrome measurement errors than pure codes with the same parameters. A stabilizer code is impure if and only if there are non-identity stabilizer elements of weight less than the minimum distance.

There has historically been a strong connection between impure codes and subsystem codes, with the former inspiring the invention of the latter \cite{Bacon2006, Poulin2005}. Many subsystem codes have been constructed by gauging impure codes, and conversely many impure codes may be seen as gauge-fixings of subsystem codes.

\begin{example}\label{qds6ex}
We start with one of the smallest example of an impure distance-3 stabilizer code, the $\left[\!\left[6,1,3\right]\!\right]$ code investigated by Shaw et al. \cite{Shaw2008}, whose stabilizer group can be associated with the classical code generated by \begin{equation} G=\begin{bmatrix} 
0 & 0 & 0 & \omega & 0 & \omega \\
\omega^{2} & 0 & \omega & 1 & 1 & \omega^{2} \\
\omega & 1 & 0 & 0 & 1 & \omega \\
0 & \omega & 1 & 1 & 1 & 1 \\
\omega & \omega & \omega & 0 & \omega & 0 \\
\end{bmatrix}.\end{equation}

Similar to the construction in Theorems \ref{fujiwarathm} and \ref{qdssubsystemd3}, we will choose the stabilizer element associated with the classical $\mathbb{F}_{4}$ codeword $\mathbf{g}_{1}'=\mathbf{g}_{1}+\dots+\mathbf{g}_{5},$ which will act as a parity check for the syndrome, but instead of using it as a sixth measurement to make an overdetermined set of stabilizer elements, we will use it in place of the impure stabilizer generator associated with $\mathbf{g}_{1}$. Let $\omega_{j}^{i}\in\mathbb{F}_{4}^{6}$ denote the error that is $\omega^{i}$, $i\in\left\{0,1,2\right\}$ on a single coordinate $j\in\left[6\right]$ and $0$ at ever other coordinate. Every error of this form will have an even-weight syndrome, except for the four such that $\omega_{j}^{i}*\mathbf{g}_{1}=1$ (i.e., those whose associated Pauli error anticommutes with the weight-2 Pauli stabilizer generator associated with $\mathbf{g}_{1}$), and of these only the error $\omega_{4}^{0}$ has a weight-1 syndrome. Our goal is to therefore swap out some of the stabilizer generators with other stabilizer elements to increase the weight of $\omega_{4}^{0}$ while not adversely affecting the other syndromes.

Note that for any error $\omega_{j}^{i}$ such that $\omega_{j}^{i}*\mathbf{g}_{1}=0$, we have $\omega_{j}^{i}*\mathbf{g}_{\ell}=\omega_{j}^{i}*\left(\mathbf{g}_{\ell}+\mathbf{g}_{1}\right)$, $\ell\neq1$, meaning $\mathbf{g}_{\ell}$ can be replaced with $\mathbf{g}_{\ell}+\mathbf{g}_{1}$ and not affect the even-weight syndromes. Additionally, in order to preserve the linear independence of the rows of the generator matrix, we need to always change an even number of the rows in this way, as well as make sure that the resultant generator matrix $G'$ does not produce any weight-1 syndromes for the four errors that have odd-weight syndromes. Indeed, by replacing $\mathbf{g}_{4}$ and $\mathbf{g}_{5}$ with $\mathbf{g}_{4}+\mathbf{g}_{1}$ and $\mathbf{g}_{5}+\mathbf{g}_{1}$ respectively, we get a generator matrix that makes the stabilizer code a $\left[\!\left[6,1,3:0\right]\!\right]$ QDS code:
\begin{equation}G'=\begin{bmatrix} 
\omega^{2} & 1 & 1 & \omega & \omega^{2} & \omega \\
\omega^{2} & 0 & \omega & 1 & 1 & \omega^{2} \\
\omega & 1 & 0 & 0 & 1 & \omega \\
0 & \omega & 1 & \omega^{2} & 1 & \omega^{2} \\
\omega & \omega & \omega & \omega & \omega & \omega \\
\end{bmatrix}.\end{equation}

\end{example}

We now generalize the construction in Example \ref{qds6ex} and show that the construction can be applied to any impure distance-3 stabilizer code. But first, we need a brief detour to show that any distance-3 stabilizer code that is equivalent to a QDS code is itself a QDS code.

Two stabilizer codes $\mathcal{C}$ and $\mathcal{C}'$ are equivalent if there is a unitary map made up of local unitaries and qubit permutations such that maps encoded basis states of $\mathcal{C}$ to encoded basis states of $\mathcal{C}'$. Likewise, two additive codes $C$ and $C'$ over $\mathbb{F}_{4}$ are equivalent if there is a combination of codeword coordinate permutation, multiplication by nonzero scalars on fixed coordinates, and conjugation on fixed coordinates \cite{Calderbank1998}. Two stabilizer codes are equivalent if and only if the classical additive codes associated with their stabilizer groups are equivalent.

\begin{lemma}\label{equivqds}
Let $\mathcal{C}$ be an $\left[\!\left[n,k,3:0\right]\!\right]$ QDS code. Then any stabilizer code $\mathcal{C}'$ equivalent to $\mathcal{C}$ has a choice of stabilizer generators that makes it an $\left[\!\left[n,k,3:0\right]\!\right]$ QDS code.
\end{lemma}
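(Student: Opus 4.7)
The plan is to observe that the $\left[\!\left[n,k,3:0\right]\!\right]$ QDS condition is expressible purely in terms of Hamming weight on $\mathbb{F}_{4}^{n}$ and the trace inner product $*$, and then to verify that each elementary operation generating the $\mathbb{F}_{4}$-additive-code equivalence group preserves both. Once this is in place, one simply transports the stabilizer generators of $\mathcal{C}$ through the equivalence map and the QDS property carries over to $\mathcal{C}'$ automatically.

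First I would unpack what $\left[\!\left[n,k,3:0\right]\!\right]$ means for a chosen generating set $\mathbf{g}_{1},\dots,\mathbf{g}_{m}$ of $C$. Since $\hat{C}$ has generator matrix $\begin{bmatrix}G_{C} & I_{m}\end{bmatrix}$, we have $\hat{C}^{\perp}=\{(\mathbf{e},\mathbf{s}):s_{i}=\mathbf{g}_{i}*\mathbf{e}\}$. Stratifying by $\wt(\mathbf{e})$: the $\wt(\mathbf{e})=0$ and $\wt(\mathbf{e})\geq3$ cases are automatic, while the $\wt(\mathbf{e})=2$ case is handled by the hypothesized stabilizer distance $d\geq 3$ (which forces $\mathbf{s}\neq\mathbf{0}$ unless $\mathbf{e}\in C$). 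Hence the only substantive condition is the following weight statement: \emph{every weight-$1$ vector $\mathbf{e}\in\mathbb{F}_{4}^{n}$ produces a syndrome $(\mathbf{g}_{1}*\mathbf{e},\dots,\mathbf{g}_{m}*\mathbf{e})$ of Hamming weight at least $2$.}

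Next I would verify that each of the three generating types of $\mathbb{F}_{4}$-additive-code equivalence -- coordinate permutation, coordinatewise multiplication by $\lambda\in\mathbb{F}_{4}^{*}$, and coordinatewise conjugation $x\mapsto\overline{x}$ -- induces an $\mathbb{F}_{2}$-linear bijection $\phi:\mathbb{F}_{4}^{n}\to\mathbb{F}_{4}^{n}$ that preserves both Hamming weight and the $*$-inner product. Weight preservation is immediate since each operation is coordinatewise and nondegenerate on each coordinate. For $*$-preservation, multiplication of coordinate $j$ by $\lambda$ rescales $u_{j}\overline{v_{j}}$ by $\lambda\overline{\lambda}=1$ (using $\omega\overline{\omega}=\omega^{3}=1$ and $1\cdot\overline{1}=1$); conjugation of coordinate $j$ replaces $u_{j}\overline{v_{j}}$ by its Frobenius image, which leaves $\Tr_{\mathbb{F}_{2}}^{\mathbb{F}_{4}}$ unchanged; and permutation is trivial.

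Finally, given an equivalence $\phi$ between $\mathcal{C}$ and $\mathcal{C}'$, I would choose $\mathbf{g}_{i}'=\phi(\mathbf{g}_{i})$ as stabilizer generators of $\mathcal{C}'$. For any weight-$1$ error $\mathbf{e}'\in\mathbb{F}_{4}^{n}$, set $\mathbf{e}=\phi^{-1}(\mathbf{e}')$; this is also weight $1$, and $\mathbf{g}_{i}'*\mathbf{e}'=\phi(\mathbf{g}_{i})*\phi(\mathbf{e})=\mathbf{g}_{i}*\mathbf{e}$ by $*$-preservation, so the syndrome of $\mathbf{e}'$ under $\{\mathbf{g}_{i}'\}$ equals that of $\mathbf{e}$ under $\{\mathbf{g}_{i}\}$ and hence has weight $\geq 2$. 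Combined with the fact that the underlying stabilizer distance is invariant under equivalence, this establishes the $\left[\!\left[n,k,3:0\right]\!\right]$ QDS property for $\mathcal{C}'$ with these generators. The only real computational content is the identity $\lambda\overline{\lambda}=1$ for all $\lambda\in\mathbb{F}_{4}^{*}$, which is what makes the $*$-inner product invariant under scalar multiplication; everything else is bookkeeping.
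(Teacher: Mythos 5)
Your proof is correct and follows essentially the same route as the paper: both arguments reduce the QDS property to the syndromes of weight-one errors and show that each generating operation of $\mathbb{F}_{4}$-additive-code equivalence merely relabels those syndromes. The only difference is presentational — you justify the invariance via the explicit computations $\lambda\overline{\lambda}=1$ and the Frobenius-invariance of $\Tr_{\mathbb{F}_{2}}^{\mathbb{F}_{4}}$, where the paper simply asserts that scalar multiplication cycles and conjugation swaps the syndromes on a coordinate.
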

\begin{proof}
Let $C$ and $C'$ be the equivalent classical codes associated with the stabilizer groups of $\mathcal{C}$ and $\mathcal{C}'$. Let $G$ be the generator matrix for $C$, with the rows $g_{i}$ associated with the choice of stabilizer generators that make $\mathcal{C}$ a QDS code, and let $s_{i,j}$, $i\in\left\{0,1,2\right\}$ be the syndrome for the error $\omega^{i}$ on the coordinate $j\in\left[n\right]$. A permutation $\sigma$ on the coordinates will take $s_{i,j}$ to $s_{i,\sigma\left(j\right)}$, so the set of syndromes is stabilized by permutations. Multiplication by $\omega$ (respectively, $\omega^{2}$) on a coordinate will cycle the syndromes on that coordinate, so $s_{i,j}$ becomes $s_{i+1\!\!\!\mod 3,j}$ (respectively, $s_{i+2\!\!\!\mod 3,j}$), both of which stabilize the set of syndromes. Finally, conjugation on a coordinate $j$ swaps the syndromes $s_{1,j}$ and $s_{2,j}$, again stabilizing the set of syndromes. If we take the generator matrix $G$ and apply the necessary transformations to make it the generator matrix $G'$ for the code $C'$, the set of syndromes has been preserved and therefore the rows of $G'$ are a valid choice of stabilizer generators that make $\mathcal{C}'$ an $\left[\!\left[n,k,3:0\right]\!\right]$ QDS code.
\end{proof}

\begin{theorem}\label{d3qdsconstruction}
Any impure $\left[\!\left[n,k,3\right]\!\right]$ stabilizer code $\mathcal{C}$ has a choice of stabilizer generators that make it an $\left[\!\left[n,k,3:0\right]\!\right]$ QDS code.
\end{theorem}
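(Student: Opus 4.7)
The plan is to generalize Example~\ref{qds6ex} directly. By impurity, $\mathcal{S}$ contains a nontrivial element of weight at most~$2$; a weight-$1$ stabilizer would let the code reduce to a stabilizer code on $n-1$ qubits (and the result would follow by induction), so I assume the weight is exactly~$2$, with support on two qubits $a$ and $b$. Extend to a generating set $\mathbf{g}_{1}, \ldots, \mathbf{g}_{m}$ with $\mathbf{g}_{1}$ the chosen weight-$2$ element, and replace it with
\[
\mathbf{g}_{1}' = \mathbf{g}_{1} + \mathbf{g}_{2} + \cdots + \mathbf{g}_{m}, \qquad \mathbf{g}_{\ell}' = \mathbf{g}_{\ell} + t_{\ell}\,\mathbf{g}_{1} \quad (\ell \geq 2),
\]
with bits $t_{\ell} \in \mathbb{F}_{2}$ to be chosen. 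A short determinant computation shows that this change of generators is invertible over $\mathbb{F}_{2}$ iff $\sum_{\ell \geq 2} t_{\ell}$ is even.

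For a single-qubit error $E$ with original syndrome $(s_{1}, \ldots, s_{m})$, the new syndrome is $\bigl(\textstyle\sum_{i} s_{i},\ s_{2} + t_{2} s_{1},\ \ldots,\ s_{m} + t_{m} s_{1}\bigr)$. When $E$ commutes with $\mathbf{g}_{1}$ (so $s_{1}=0$), the last $m-1$ bits are unchanged and the first bit records the parity of the rest, so the new syndrome has even Hamming weight; it is also nonzero because a single-qubit element of $C^{\perp}$ would force a weight-$1$ stabilizer, contradicting our reduction. Hence commuting single-qubit errors already have new syndrome weight at least~$2$. The remaining concern is the at most four single-qubit errors $E_{1}, \ldots, E_{4}$ on qubits $a, b$ that anticommute with $\mathbf{g}_{1}$: their new syndromes have odd weight, and the $t_{\ell}$'s must be chosen so none is weight~$1$. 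The crucial observation is that in $\mathbb{F}_{4}^{n}$ the four anticommuting errors sum to $\mathbf{g}_{1}$ itself, yielding the linear relation $s(E_{1}) + s(E_{2}) + s(E_{3}) + s(E_{4}) = 0$ in $\mathbb{F}_{2}^{m}$.

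Using this relation together with the distance-$3$ condition, which forbids weight-$1$ or weight-$2$ elements of $C^{\perp} \setminus C$, one shows that the number $b$ of $E_{i}$ with original syndrome $(1, 0, \ldots, 0)$ lies in $\{0, 1, 2\}$ (the cases $b = 3, 4$ would force a weight-$1$ centralizer element). The case $b = 0$ is handled by the trivial choice $t_{\ell} \equiv 0$; for $b \geq 1$, set $T = \{\ell : t_{\ell} = 1\}$ and require that $T$ be nonempty, of even size, and avoid a small family of forbidden subsets imposed by the nonbad $E_{i}$ (either one subset or $m-1$ subsets per error, depending on the parity of $\wt(s_{2}(E_{i}), \ldots, s_{m}(E_{i}))$).

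The hardest part will be verifying that a legal $T$ always exists, especially for $b \in \{1, 2\}$ when $m$ is small so that a crude counting bound of forbidden versus available even-parity subsets is too weak. The argument must exploit both the linear relation $\sum_{i} s(E_{i}) = 0$ and the parity identities on $\wt(s(E_{i})) \bmod 2$ that it induces, as well as the additional impurity structure forced in the $b = 2$ case (a second independent weight-$2$ stabilizer on the qubits $\{a, b\}$, giving a second degree of freedom that either enlarges the pool of valid $T$'s or allows a different $\mathbf{g}_{1}$ to be chosen for which $b$ drops to at most~$1$).
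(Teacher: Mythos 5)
Your setup matches the paper's: replace $\mathbf{g}_{1}$ by $\mathbf{g}_{1}+\cdots+\mathbf{g}_{m}$, observe that errors commuting with $\mathbf{g}_{1}$ acquire even-weight syndromes, and then try to fix the odd-weight syndromes of the anticommuting errors by adding $\mathbf{g}_{1}$ to an even-sized subset $T$ of the remaining rows. The relation $s(E_{1})+s(E_{2})+s(E_{3})+s(E_{4})=0$ is a correct and nice observation. But the proposal has a genuine gap exactly where you flag it: you never establish that a legal $T$ exists. The paper closes this by a crude count --- each problematic error forbids at most $m$ of the $2^{m-1}$ even-weight shift vectors, so a valid choice exists once $2^{m-1}>4m$, i.e.\ $m\geq 6$ --- and then disposes of the finitely many residual parameters ($m\leq 5$) using the quantum Hamming bound for impure distance-3 codes, Grassl's tables, and the classification of $\left[\!\left[5,1,3\right]\!\right]$, $\left[\!\left[6,1,3\right]\!\right]$, and $\left[\!\left[8,3,3\right]\!\right]$ codes, including an explicit generator-matrix computation for one $\left[\!\left[6,1,3\right]\!\right]$ equivalence class (Example~\ref{qds6ex}) transported to the whole class via Lemma~\ref{equivqds}. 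The fact that the paper needs a bespoke computation for that one class is strong evidence that the purely structural argument you hope for (from $\sum_{i}s(E_{i})=0$ and ``additional impurity structure'') does not exist in the small-$m$ regime; without it your proof does not go through.

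Two further problems. First, your $b=0$ case is wrong: with $t_{\ell}\equiv 0$, an anticommuting error with original syndrome $(1,0,\ldots,0,1,0,\ldots,0)$ (ones in position $1$ and one position $j\geq 2$) gets new syndrome $(0,e_{j})$ of weight $1$, so the trivial choice can fail even when no $E_{i}$ has syndrome $(1,0,\ldots,0)$; the errors you must guard against are all four anticommuting ones, not just those with syndrome $e_{1}$. Second, the reduction of the weight-$1$ case to induction on $n$ is unjustified: the shortened $\left[\!\left[n-1,k,3\right]\!\right]$ code need not be impure (so the inductive hypothesis may not apply), and even when it is, lifting a good generator choice back to $n$ qubits requires re-checking the three single-qubit errors on the deleted coordinate against the enlarged syndrome. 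The paper avoids both issues by handling $\wt(g_{1})=1$ directly with the same counting argument (only two problematic errors, so $2^{m-1}>2m$ suffices, i.e.\ $m\geq 5$).
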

\begin{proof}
Let $C$ be the classical code associated with the stabilizer group $\mathcal{S}$ of $\mathcal{C}$ and let $G$ be a generator matrix with rows $g_{i}$, such that $\wt\!\left(g_{1}\right)<3$. Let $S_{i}$ be the stabilizer generator associated with $g_{i}$. We will slightly modify the approach from Theorems \ref{fujiwarathm} and \ref{qdssubsystemd3}, and replace row $g_{1}$ with $g_{1}'=g_{1}\oplus g_{2}\oplus\cdots\oplus g_{m}$. Any single-qubit Pauli error $\omega^{i}$ such that $\omega^{i}*g_{1}=0$ (i.e., their associated Pauli operators commute) will have an even-weight syndrome, as the syndrome bit associated with $g_{1}'$ will act as a parity bit for the rest of the syndrome. If $\omega^{i}*g_{1}=1$ (i.e., their associated Pauli operators anticommute), then the syndrome will be odd-weight. 

For $\mathcal{C}$ to be an $\left[\!\left[n,k,3:0\right]\!\right]$ QDS code, we need to make sure there are no syndromes of weight 1. All of the potentially problematic syndromes are from $\omega^{i}$ such that $\omega^{i}*g_{1}=1$. We can change the syndromes of these errors while keeping the syndromes for all other errors the same by adding $g_{1}$ to an even number of the $m$ generator matrix rows (as adding $g_{1}$ to an odd number of rows would reduce the rank of the generator matrix by one). This is equivalent to adding an even-weight bitstring $a$ to the odd-weight syndromes.

Consider the case where $\wt\!\left(g_{1}\right)=1$, so there are two single-qubit Pauli errors such that $\omega^{i}*g_{1}=1$, so there are two odd-weight syndromes $s_{1}$, $s_{2}$. In the worst possible case we need there to be more than $2m$ even-weight bitstrings to guarantee there is one $a$ such that $s_{1}\oplus a$ and $s_{2}\oplus a$ both do not have weight 1. Since there are $2^{m-1}$ possible even-weight strings of length $m$, we are guaranteed to have such a string whenever $m<2^{m-2}$, which is always true when $m>4$. Since any distance-3 stabilizer code other than the pure $\left[\!\left[5,1,3\right]\!\right]$ code \cite{Bennett1996, Laflamme1996} (which is unique up to equivalence of stabilizer codes) necessarily has at least 5 stabilizer generators, it follows that we can always choose a subset of the $\left\{g_{1}', g_{2}, \dots, g_{m}\right\}$ that we can add $g_{1}$ to that gives an $\left[\!\left[n,k,3:0\right]\!\right]$ QDS code.

Consider the case where $\wt\!\left(g_{1}\right)=2$. Here, there are four single-qubit Pauli errors such that $\omega^{i}*g_{1}=1$, so there are four odd-weight syndromes $s_{i}$, $i\in\left[4\right]$. In the worst possible case we need there to be more than $4m$ even-weight bitstrings to guarantee there is one $a$ such that $s_{i}\oplus a$ does not have weight 1 for all $i\in\left[4\right]$. Since there are $2^{m-1}$ possible even-weight strings of length $m$, we are guaranteed to have such a string whenever $m<2^{m-3}$, which is always true when $m>5.6$. All impure distance 3 stabilizer codes must obey the quantum Hamming bound \cite{Gottesman1997}: \begin{equation}m\geq\left\lceil\log_{2}\!\left(3n+1\right)\right\rceil,\end{equation} so we only need to check for stabilizer codes of length under 11 with fewer than 6 stabilizer generators. From Grassl's code table \cite{GrasslONLINE} the only code parameters that satisfy these restrictions are $\left[\!\left[5,1,3\right]\!\right]$, $\left[\!\left[6,1,3\right]\!\right]$, and $\left[\!\left[8,3,3\right]\!\right]$. As mentioned above, all $\left[\!\left[5,1,3\right]\!\right]$ stabilizer codes must be pure, and similarly all $\left[\!\left[8,3,3\right]\!\right]$ stabilizer codes are equivalent \cite{Cross2022} to the pure code discovered by Gottesman \cite{Gottesman1996}. There are two equivalency classes of $\left[\!\left[6,1,3\right]\!\right]$ stabilizer codes \cite{Calderbank1998}, both of which are impure. The first of these has a stabilizer element of weight 1, which is covered by the previous case, while Example \ref{qds6ex} and Lemma \ref{equivqds} together show that any stabilizer code from the second equivalency class has a choice of stabilizer generators that make it an $\left[\!\left[n,k,3:0\right]\!\right]$ QDS code. Therefore every distance 3 code where $\wt\!\left(g_{1}\right)=2$ has a choice of stabilizer generators that make it an $\left[\!\left[n,k,3:0\right]\!\right]$ QDS code, finishing the proof.
\end{proof}

Even though the search space of even-weight strings is exponential in $m$, this search can be done efficiently, as only $O\!\left(m\right)$ strings need to be checked in the worst case before a valid string is found.

\subsection{QDS Hamming Bound}

Both Fujiwara \cite{Fujiwara2014} and Ashikhmin, Lai, and Brun \cite{Ashikhmin2020} showed that any pure $\left[\!\left[n,k,d:r\right]\!\right]$ QDS code with minimum distance $d=2t+1$ must satisfy the Hamming bound for QDS codes: \begin{equation} \sum\limits_{i=0}^{t}\binom{n}{i}3^{i}\sum\limits_{j=0}^{t-i}\binom{m}{j}\leq 2^{m}. \end{equation} This bound also holds for impure QDS codes if the length $n$ is large enough.

\begin{theorem}\label{qdshammingthm}
All $\left[\!\left[n,k,3:0\right]\!\right]$ QDS codes satisfy the Hamming bound for QDS codes.
\end{theorem}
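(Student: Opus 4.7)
The plan is to prove the bound $3n+m+1\le 2^m$ by a counting argument on weight-$\le 1$ error patterns in $\mathbb{F}_4^n\times\mathbb{F}_2^m$, augmented by a qubit-reduction for the impure case. A $[[n,k,3:0]]$ QDS code must correct each of the $1+3n+m$ weight-$\le 1$ events: the zero error, the $3n$ single-qubit data errors, and the $m$ single syndrome-bit errors. The $l=0$ QDS condition forces each single-qubit data-error syndrome to have weight at least $2$ (or to be $0$, if the error happens to be a stabilizer element), while each single syndrome-bit error produces a syndrome of weight exactly $1$. Hence the two types of syndromes lie in disjoint subsets of $\mathbb{F}_2^m$.

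For a pure code, no two single-qubit data errors are stabilizer equivalent and none lies in $\mathcal{S}$, so the $3n$ data-error syndromes are distinct nonzero vectors of weight $\ge 2$; together with the $m$ weight-$1$ syndrome-bit syndromes and the zero syndrome, this supplies $3n+m+1$ distinct elements of $\mathbb{F}_2^m$, giving the bound at once.

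For an impure code I would induct on $n$. If $\mathcal{S}$ contains a weight-$1$ element, WLOG $X_i$ on qubit $i$, then $\mathcal{C}=|{+}\rangle_i\otimes\mathcal{C}'$ with $\mathcal{C}'$ an $[[n-1,k,3:0]]$ QDS code. Distance $3$ is preserved because any logical of $\mathcal{C}$ can be chosen with $I$ on qubit $i$ by multiplying by $X_i\in\mathcal{S}$ (a logical of the form $X_iL'$ with $\wt(L)=3$ would give a weight-$2$ logical $L'$, contradicting distance $3$). The QDS condition transfers because any single-qubit error on qubit $j\ne i$ commutes with $X_i$, so the removed syndrome bit is $0$ and the remaining syndrome still has weight $\ge 2$. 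The induction hypothesis then gives $3(n-1)+(m-1)+1\le 2^{m-1}$, which rearranges to $3n+m+1\le 2^{m-1}+4\le 2^m$ whenever $m\ge 3$; base cases are covered by the quantum Singleton bound $m\ge 2(d-1)=4$, which rules out nontrivial impure codes at $m\le 4$.

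The main obstacle is the case of impure codes whose stabilizer has no weight-$1$ element but only weight-$2$ (or higher) impure elements---the $[[6,1,3]]$ example of the paper is already of this type---so the qubit-reduction is not directly available, and the operations permitted by Lemma \ref{equivqds} (permutations, single-qubit unitaries, and $\mathbb{F}_4$-conjugations) all preserve weights and so cannot lower a weight-$2$ stabilizer element to weight $1$. Here I would supplement the counting argument as follows: each weight-$2$ element $c=P_iQ_j\in C$ forces the syndromes of $P_i$ and $Q_j$ to agree on some common $s$ with $\wt(s)\ge 2$, contributing exactly one merged pair to the data-syndrome multiset, and by bounding the number of such merges against the slack $2^m-N_d-m-1$ (where $N_d$ is the number of distinct data-error syndromes) together with an exhaustive check of the few small $(n,m)$ allowed by Grassl's code tables, one obtains $3n+m+1\le 2^m$. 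Making this last step precise---showing that the structural interaction between the weight-$2$ impure elements and the QDS weight-$\ge 2$ condition never permits enough savings to evade the pure-case bound---is the technically delicate part of the argument.
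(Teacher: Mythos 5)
Your pure-case counting is exactly the argument the paper defers to Ashikhmin--Lai--Brun, and your weight-$1$ reduction matches the paper's opening move (delete any coordinate carrying a weight-$1$ stabilizer element and pass to an $\left[\!\left[n-1,k,3\right]\!\right]$ code). One caveat there: you cannot declare ``WLOG'' that the weight-$1$ element $X_i$ is one of the chosen QDS generators --- re-deriving a generating set that contains $X_i$ can destroy the property that every single-qubit data error has syndrome weight $\geq 2$, so the shortened code need not inherit the $\left[\!\left[n-1,k,3:0\right]\!\right]$ structure your induction hypothesis requires. The paper avoids this by treating the target as a purely numerical inequality, $4n-k+1\leq 2^{n-k}$, and only needing the shortened code's \emph{parameters} to satisfy it.

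The genuine gap is the case you flag yourself: impure codes whose only low-weight stabilizer elements have weight $2$. Your proposed repair --- counting ``merges'' of single-qubit-error syndromes and bounding them against the slack $2^{m}-N_{d}-m-1$ --- is not a proof and does not obviously become one: nothing in the sketch controls how small $N_{d}$ can be. With $\ell$ independent weight-$2$ stabilizer elements the group they generate can contain further weight-$2$ elements (e.g.\ $X_{i}X_{j}$, $Z_{i}Z_{j}$, and hence $Y_{i}Y_{j}$ all in $\mathcal{S}$, so qubits $i$ and $j$ contribute only three distinct syndromes instead of six), and you give no mechanism forcing the total collapse to stay below the slack. The paper closes this case with a different, concrete argument in the style of Gottesman's proof of the impure quantum Hamming bound: let $\mathcal{D}=\left\langle M_{1},\dots,M_{\ell}\right\rangle$ be generated by the $\ell$ independent weight-$2$ stabilizer elements; the at least $n-2\ell$ qubits untouched by $\mathcal{D}$ supply $3\left(n-2\ell\right)+1$ errors mapping the code to mutually orthogonal subspaces inside the $2^{n-\ell}$-dimensional fixed space of $\mathcal{D}$, giving $3\left(n-2\ell\right)+1\leq 2^{n-k-\ell}$. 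This trades the number of weight-$2$ elements against $n-k$ in a quantitative way: one checks the resulting inequality implies $4n-k+1\leq 2^{n-k}$ for all but finitely many $\left(n,\ell\right)$, and eliminates the stragglers with the QDS Singleton bound and Grassl's tables. Without a statement of this kind --- an explicit bound relating $\ell$ to $n-k$ --- your argument does not go through, and the small-case ``exhaustive check'' you invoke has no finite list of cases to check.
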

\begin{proof}
As this was already shown to hold for pure codes in \cite{Ashikhmin2020}, we only need to show it holds for impure codes as well. Our proof follows the proof by Gottesman \cite{Gottesman1997} that the quantum Hamming bound holds for any impure $\left[\!\left[n,k,3\right]\!\right]$ stabilizer code. Note that in the special case when $d=3$, the Hamming bound for QDS codes reduces to \begin{equation}\label{simpleqdshamming}4n-k+1\leq 2^{n-k}.\end{equation}

Any coordinate with a weight-1 stabilizer element acting on it may be deleted, ending up with an $\left[\!\left[n-1,k,3\right]\!\right]$ code (see \cite[Theorem 6]{Calderbank1998}). If this shortened code satisfies the inequality in Eq. \ref{simpleqdshamming}, then we have \begin{align*}
2^{n-k} & \geq 2\left(4\left(n-1\right)-k+1\right) \\
 & \geq 8n-2k-6 \\
 & \geq 4n-k+1,
\end{align*}
where the last inequality is due to the Singleton bound for QDS codes \cite[Theorem 5]{Ashikhmin2020}, so the original will also satisfy the inequality in Eq. \ref{simpleqdshamming}. Therefore, we need to only look at codes with weight-2 stabilizer elements.

Suppose the code has $\ell$ independent weight-2 stabilizer elements $M_{1},\dots,M_{\ell}$, and let $\mathcal{D}=\left\langle M_{1},\dots,M_{\ell}\right\rangle$ be the subgroup of $\mathcal{S}$ generated by them. Any Pauli operator in the centralizer $C_{G_{n}}\!\!\left(\mathcal{D}\right)$ of $\mathcal{D}$ will take states stabilized by $\mathcal{D}$ to states stabilized by $\mathcal{D}$, and the subspace stabilized by $\mathcal{D}$ has dimension $2^{n-\ell}$. Any single-qubit Pauli operator that acts on a qubit not acted upon by $\mathcal{D}$ is in $C_{G_{n}}\!\!\left(\mathcal{D}\right)$, and will map the subspace stabilized by $\mathcal{D}$ to an orthogonal subspace. As there are always at least $n-2\ell$ qubits not acted on by $\mathcal{D}$, we get the bound \begin{equation}\label{impurebound} 3\left(n-2\ell\right)+1\leq 2^{n-k-\ell}.\end{equation}

The inequality in Eq. \ref{impurebound} is stricter than the one in Eq. \ref{simpleqdshamming} whenever \begin{equation} \ell\geq\log_{2}\left(\frac{4n-k+1}{3\left(n-2\ell\right)+1}\right),\end{equation} and we can simplify and say that the inequality in Eq. \ref{impurebound} is stricter than the one in Eq. \ref{simpleqdshamming} whenever \begin{align} \ell & \geq\log_{2}\left(\frac{4n+4/3}{3\left(n-2\ell\right)+1}\right) \\ & = \log_{2}\left(\frac{4}{3}+\frac{8\ell}{3\left(n-2\ell\right)+1}\right).\label{bottomex}\end{align} Assume that $n\geq 2\ell$. Then Eq. \ref{simpleqdshamming} holds when $\ell\geq\log_{2}\left(4/3+8\ell\right)$, which happens whenever $\ell\geq 6$. For $\ell=5$, Eq. \ref{bottomex} holds for $n\geq 11$, when $\ell=4$ it holds for $n\geq 9$, when $\ell=3$ it holds for $n\geq 7$, when $\ell=2$ it holds for $n\geq 6$, and when $\ell=1$ it holds for $n\geq 6$. Then remaining cases when $n\geq 2\ell$ 
can all be ruled out using Eq. \ref{simpleqdshamming} and Grassl's code table \cite{GrasslONLINE}. Now assume $n<2\ell$. Then $k\leq n-\ell\leq n/2$, which is stricter than \begin{equation}k\leq n-\log_{2}\!\left(4n+1\right) \end{equation} (which is in turn stricter than the inequality in Eq. \ref{simpleqdshamming}) whenever $n\geq 11$, and similar to above the remaining cases can all be ruled out using Eq. \ref{simpleqdshamming} and Grassl's code table.
\end{proof}

\subsection{A New Bound for Impure Codes}

As a direct result of combining Theorems \ref{d3qdsconstruction} and \ref{qdshammingthm}, we obtain a new bound on single-error correcting impure stabilizer codes:

\begin{corollary}\label{newimpbound}
Any impure $\left[\!\left[n,k,3\right]\!\right]$ stabilizer code satisfies the bound \begin{equation} \log_{2}\!\left(4n-k+1\right)\leq n-k.\label{impbound}\end{equation}
\end{corollary}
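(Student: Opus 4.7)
The plan is to obtain the bound as a near-immediate consequence of the two preceding theorems, with no additional technical work required. First I would recall from the proof of Theorem \ref{qdshammingthm} that the Hamming bound for QDS codes, specialized to the case $d=3$ (so $t=1$), collapses to the single inequality
\begin{equation*}
4n - k + 1 \leq 2^{n-k},
\end{equation*}
since the double sum reduces to $1 + 3n + m = 3n + (n-k) + 1 = 4n - k + 1$.

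Next, given an arbitrary impure $[\![n,k,3]\!]$ stabilizer code $\mathcal{C}$, I would invoke Theorem \ref{d3qdsconstruction} to produce a choice of stabilizer generators making $\mathcal{C}$ into an $[\![n,k,3:0]\!]$ QDS code. This is exactly the hypothesis needed to apply Theorem \ref{qdshammingthm}, which then guarantees that $\mathcal{C}$ satisfies the QDS Hamming bound, i.e., $4n - k + 1 \leq 2^{n-k}$.

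Finally, I would take $\log_{2}$ of both sides to arrive at the claimed inequality $\log_{2}(4n - k + 1) \leq n - k$. There is no real obstacle to overcome here: the corollary is genuinely a packaging of the two previous results, and the only thing worth flagging explicitly is that the impurity hypothesis is exactly what lets us drop the ``$:l$'' overhead (by Theorem \ref{d3qdsconstruction} we get $l=0$ for free), so the bound applies to the original stabilizer parameters $(n,k)$ without any inflation of $m$.
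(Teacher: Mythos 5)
Your proposal is correct and is exactly the paper's argument: the corollary is stated as a direct combination of Theorem \ref{d3qdsconstruction} (an impure distance-3 code admits generators making it an $\left[\!\left[n,k,3:0\right]\!\right]$ QDS code) with Theorem \ref{qdshammingthm} (such codes obey the QDS Hamming bound, which for $d=3$ and $l=0$ reduces to $4n-k+1\leq 2^{n-k}$). Your reduction of the double sum to $1+m+3n=4n-k+1$ and the final logarithm step match the paper's reasoning precisely.
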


Used in combination with the bound given by Yu et al. in \cite[Theorem 1]{Yu2013}, Eq. \ref{impbound} allows us to determine parameters where pure stabilizer codes are known to exist but impure stabilizer codes do not. For the next two corollaries, let $f_{a}=\left(4^{a}-1\right)/3$.

\begin{corollary}\label{cor1}
There exists pure, but not impure, stabilizer codes with parameters \begin{equation}\left[\!\left[f_{a}-\ell,f_{a}-\ell-\left\lceil\log_{2}\!\left(3\left(f_{a}-\ell\right)+1\right)\right\rceil,3\right]\!\right],\end{equation} where $\ell=0, 4, 5,\dots$, $\ell<2a/3$, $a\geq 2$.
\end{corollary}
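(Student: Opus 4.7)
My plan is to combine the nonexistence half from Corollary \ref{newimpbound} with an existence result of Yu et al. \cite[Theorem 1]{Yu2013} for pure codes, and the only real content is a careful arithmetic check that the stated parameters fall strictly inside the region where \eqref{impbound} is violated.

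First I would pin down the value of $n-k$ for the parameters in question. Writing $n = f_{a} - \ell$, one has $3n+1 = 4^{a} - 3\ell$. Since $\ell \geq 0$ and the constraint $\ell < 2a/3$ combined with $a \geq 2$ gives $3\ell < 2a \leq 2^{2a-1}$, we have $2^{2a-1} < 4^{a} - 3\ell \leq 4^{a}$, so $\lceil \log_{2}(3n+1)\rceil = 2a$ throughout the stated range. Hence the code parameters are $[\![f_{a}-\ell,\; f_{a}-\ell-2a,\; 3]\!]$, with $n-k = 2a$.

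Next I would apply Corollary \ref{newimpbound} in the contrapositive to rule out impure codes. Substituting $n-k = 2a$ and $n = f_{a}-\ell$ into the right-hand side of \eqref{impbound} and computing the inside of the logarithm gives
\begin{equation*}
4n - k + 1 \;=\; 3n + (n-k) + 1 \;=\; 4^{a} - 3\ell + 2a.
\end{equation*}
An impure code would force $\log_{2}(4^{a} - 3\ell + 2a) \leq 2a$, i.e., $2a \leq 3\ell$. Since the hypothesis is exactly $\ell < 2a/3$, this inequality fails, so no impure stabilizer code with these parameters exists.

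It remains to produce pure codes with these parameters, and here I would simply invoke \cite[Theorem 1]{Yu2013}, whose construction yields pure $[\![n,n-\lceil\log_{2}(3n+1)\rceil,3]\!]$ stabilizer codes at lengths $n = f_{a} - \ell$ for precisely the listed values $\ell = 0, 4, 5, \ldots$ (this is what forbids $\ell \in \{1,2,3\}$ in the corollary statement); combining this with the above nonexistence argument finishes the proof. The only subtlety I foresee is the boundary bookkeeping — verifying that the $\ell < 2a/3$ hypothesis really does force $n-k = 2a$ for every admissible $a \geq 2$, rather than dropping by one because of the ceiling, which is why the step $3\ell < 2a \leq 2^{2a-1}$ above is essential and uses $a \geq 2$ in a nontrivial way.
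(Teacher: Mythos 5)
Your proposal is correct and follows exactly the route the paper intends (the paper leaves this corollary's proof implicit, stating only that it follows from combining Corollary \ref{newimpbound} with the existence result of Yu et al.): the identity $4n-k+1 = 4^{a}-3\ell+2a$ together with $\lceil\log_{2}(4^{a}-3\ell)\rceil = 2a$ shows the bound \eqref{impbound} fails precisely when $\ell < 2a/3$, and Yu et al.\ supply codes meeting $k = n-\lceil\log_{2}(3n+1)\rceil$ for $\ell \notin \{1,2,3\}$, which must then be pure. Your arithmetic checks out; the only cosmetic remark is that $2a \leq 2^{2a-1}$ already holds at $a=1$, so the real role of $a\geq 2$ is just to keep the parameters nondegenerate.
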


\begin{corollary}\label{cor2}
There exists pure, but not impure, stabilizer codes with parameters \begin{equation}\left[\!\left[8f_{a}-\ell,8f_{a}-\ell-\left\lceil\log_{2}\!\left(3\left(8f_{a}-\ell\right)+1\right)\right\rceil,3\right]\!\right],\end{equation} where $\ell=0, 2, 3,\dots$, $\ell<\left(2a-4\right)/3$, $a\geq 1$.
\end{corollary}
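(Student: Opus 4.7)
The strategy mirrors Corollary \ref{cor1}, with the base length $f_a$ replaced by $8f_a$. The plan has two halves: existence of pure codes at the listed parameters, and nonexistence of impure codes at the same parameters. For the existence half, I would appeal to the Yu et al. construction cited in the paragraph introducing these corollaries: it provides a pure $\left[\!\left[8f_a, 8f_a - (2a+3), 3\right]\!\right]$ stabilizer code meeting the quantum Hamming bound, together with its allowable shortenings by $\ell$ coordinates. The fact that $\ell = 1$ is omitted from the statement reflects which shortenings of the length-$8f_a$ construction are known to preserve purity while still meeting the Hamming bound, exactly analogous to the omission of $\ell \in \{1,2,3\}$ in Corollary \ref{cor1}.

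For the nonexistence half, I would apply Corollary \ref{newimpbound}. Set $n = 8f_a - \ell$, so
\begin{equation}
3n + 1 = 8\cdot 4^{a} - 7 - 3\ell = 2^{2a+3} - 7 - 3\ell.
\end{equation}
Under the hypothesis $\ell < (2a-4)/3$ one has $7 + 3\ell < 2a+3$, and for every $a \geq 1$ this quantity is at most $2^{2a+2}$. Hence $2^{2a+2} < 3n+1 \leq 2^{2a+3}$, which pins down $m = n-k = \lceil \log_2(3n+1) \rceil = 2a+3$. Substituting into Eq.~\ref{impbound} gives
\begin{equation}
4n - k + 1 = 3n + m + 1 = 2^{2a+3} + (2a - 3\ell - 4).
\end{equation}
The hypothesis $\ell < (2a-4)/3$ is equivalent to $2a - 3\ell - 4 > 0$, so $4n-k+1 > 2^{2a+3} = 2^{n-k}$, which violates the bound of Corollary~\ref{newimpbound}. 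Therefore no impure $\left[\!\left[n,k,3\right]\!\right]$ stabilizer code can have these parameters, while the Yu et al. construction gives a pure one; the two together yield the corollary.

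The main obstacle is not the arithmetic above, which is essentially bookkeeping, but making sure the exclusions in the list of admissible $\ell$ really do line up with those $\ell$ for which Yu et al.'s shortening argument produces a pure code at the Hamming bound. I would go through the shortening step of their construction explicitly and verify that $\ell = 0$ and $\ell \geq 2$ are the admissible values (but not $\ell = 1$), just as in Corollary~\ref{cor1}. Once that bookkeeping matches, the corollary follows immediately from the computation above and Corollary~\ref{newimpbound}.
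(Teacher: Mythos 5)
Your proposal matches the paper's (implicit) derivation: the corollary is stated without a separate proof and is obtained exactly as you describe, by combining Yu et al.'s Theorem 1 (which supplies distance-3 codes attaining $k=n-\left\lceil\log_{2}\!\left(3n+1\right)\right\rceil$ at the lengths $8f_{a}-\ell$ for the listed $\ell$, the excluded $\ell=1$ being a length at which the Hamming bound is not attained) with Corollary \ref{newimpbound}, and your arithmetic pinning down $n-k=2a+3$ and showing $4n-k+1=2^{2a+3}+\left(2a-3\ell-4\right)>2^{n-k}$ is correct. One small simplification you could note: purity of the existing codes need not be verified separately, since your nonexistence half already shows that any code with these parameters is forced to be pure.
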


In particular, there are no $\left[\!\left[n,k,3:0\right]\!\right]$ QDS codes for the parameters given in Corollaries \ref{cor1} and \ref{cor2}, as such code are necessarily pure and violate the Hamming bound for QDS codes.

\begin{figure}[t]
\begin{tikzpicture}
\begin{axis}[
xlabel = {$n$},
xmajorgrids=true,
xmax = 26.1,
xmin = 18.9,
ylabel = {$k$},
ymajorgrids=true,
ymax = 19.1,
ymin = 11.9,
domain = 18.9:26.1
]
\addplot[solid, black, thick] {x-4};
\addplot[solid, blue, thick] {x - (ln(1 + 3 * x) / ln(2))};
\addplot[solid, red, thick] table{
18.9	12.9069
20 	13.9325
21	14.868
22	15.8064
23	16.7473
24	17.6906
25	18.6361
26	19.5836
};
\addplot[dashed, red, thick] {x - 1 - (ln(3 * x - 5) / ln(2))};
\addplot[mark=*] coordinates {(19,12)};
\addplot[mark=*] coordinates {(20,13)};
\addplot[mark=*] coordinates {(21,15)};
\addplot[mark=*] coordinates {(22,15)};
\addplot[mark=*] coordinates {(23,16)};
\addplot[mark=*] coordinates {(24,17)};
\addplot[mark=*] coordinates {(25,18)};
\addplot[mark=*] coordinates {(26,19)};
\end{axis}
\end{tikzpicture}
\caption{Upper bounds on quantum stabilizer codes of minimum distance 3. The quantum Singleton (black) and Hamming (blue) bounds apply to both pure and impure codes, while the bound in Corollary \ref{newimpbound} (red) and the conjectured bound in Conjecture \ref{boundconj} (red, dashed) apply only to impure codes. Marked points denote optimal parameters \cite{GrasslONLINE} for codes of length $n$.}
\label{qhamplot}
\end{figure}
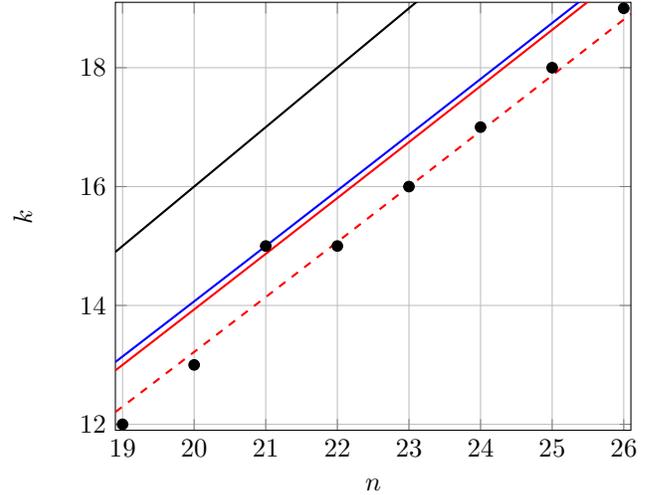

To our knowledge, most work on bounding impure quantum codes has focused on showing that every impure quantum code must satisfy the quantum Hamming bound (see \cite{Gottesman1996, Li2010, Sarvepalli2010}), most recently by Dallas et al. \cite{Dallas2022} who extended the quantum Hamming bound to all impure binary codes with $d<127$. In contrast, this new bound in Corollary \ref{newimpbound} is to our knowledge the first one that is satisfied by impure stabilizer codes but is violated by pure stabilizer codes, with the exception of the quantum Singleton bound which can be achieved by pure stabilizer codes (quantum maximum distance separable codes) but not impure stabilizer codes \cite{Calderbank1998, Rains1999}. This seems to go against the intuition that an impure code should be able to encode more information than a pure code due to syndromes used by multiple errors.

A perhaps better way to understand this is that high-weight stabilizer generators can detect more errors than low-weight ones, as all errors will commute with the large number of identity tensor components in the low-weight stabilizer. This likely overshadows the small advantage impure codes have of having multiple errors share syndromes, which can require a significant number of independent low-weight stabilizer elements.

Interestingly, there seems to be a large gap between the new bound in Corollary \ref{newimpbound} and the best known impure codes, such as the impure $\left[\!\left[22,15,3\right]\!\right]$ code marked in Fig. \ref{qhamplot}. Inspired by the inequality in Eq. \ref{impurebound} used in the proof of Theorem \ref{qdshammingthm}, as well as our intuition that the best distance-3 impure code should have exactly one weight-2 stabilizer element, we conjecture a stronger bound holds:

\begin{conjecture}\label{boundconj}
Any impure $\left[\!\left[n,k,3\right]\!\right]$ stabilizer code satisfies the bound \begin{equation} \log_{2}\!\left(3\left(n-2\right)+1\right)\leq \left(n-1\right)-k.\label{conjecturebound}\end{equation}
\end{conjecture}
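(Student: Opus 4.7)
The plan is to recognize that the conjectured inequality $3(n-2)+1\leq 2^{n-k-1}$ is exactly the $\ell = 1$ specialization of Eq.~\ref{impurebound} used in the proof of Theorem~\ref{qdshammingthm}, and then to show that the Gottesman-style counting argument forces this instance to hold for every impure $\left[\!\left[n,k,3\right]\!\right]$ code, possibly after first stripping weight-1 stabilizer generators. If $\mathcal{C}$ contains a weight-1 stabilizer element, I would shorten by deleting that coordinate to obtain $\mathcal{C}' = \left[\!\left[n-1,k,3\right]\!\right]$; if $\mathcal{C}'$ is pure, the ordinary quantum Hamming bound already yields the strictly stronger $3(n-1)+1 \leq 2^{n-k-1}$, while if $\mathcal{C}'$ is impure I iterate the shortening until no weight-1 element remains. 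The conjecture for the resulting $\left[\!\left[n-t,k,3\right]\!\right]$ code then lifts back to the original via the elementary inequality $2^{t}\bigl(3(n-t-2)+1\bigr) \geq 3(n-2)+1$, which is straightforward to verify for $t\geq 1$.

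In the remaining case, $\mathcal{C}$ has no weight-1 stabilizer element but, being impure, its weight-2 elements generate a subgroup $\mathcal{D}\leq\mathcal{S}$ of rank $\ell\geq 1$. Applying Eq.~\ref{impurebound} gives $3(n-2\ell)+1 \leq 2^{n-k-\ell}$, which for $\ell = 1$ is the conjecture verbatim. For $\ell \geq 2$ the plan is to reduce to the conjecture by establishing the algebraic inequality $2^{\ell-1}\bigl(3(n-2\ell)+1\bigr) \geq 3(n-2)+1$, equivalently $(2^{\ell-1}-1)\cdot 3n \geq 2^{\ell-1}(6\ell-1) - 5$; this holds for $n \geq 6$ when $\ell = 2$, for $n \geq 7$ when $\ell = 3$, and more generally whenever $n$ exceeds a small linear function of $\ell$. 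The finitely many small-parameter exceptions left uncovered can then be handled by direct enumeration, using the constraints $\ell \leq n-k$ and $n \geq 2\ell$ together with Grassl's table, in direct analogy with the closing argument of Theorem~\ref{qdshammingthm}.

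The main obstacle I anticipate lies in the regime where $\ell$ is comparable to $n/2$: the algebraic step above degenerates and one must individually rule out impure codes with an unusually dense weight-2 substructure. A related subtlety is justifying the full count $3(n-2\ell)+1$ of distinct syndromes in Eq.~\ref{impurebound} when $\mathcal{D}$ is taken to be generated by \emph{all} weight-2 elements of $\mathcal{S}$: every weight-2 stabilizer element then has its support inside $\mathrm{supp}(\mathcal{D})$, so the product of two single-qubit Paulis on qubits outside $\mathrm{supp}(\mathcal{D})$ cannot lie in $\mathcal{S}$ and hence cannot collapse any pair of syndromes---this is the point that makes the reduction clean. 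Finally, one must check that the iterative shortening in the first step terminates before the code becomes trivial and that each intermediate shortened code retains distance exactly $3$, but this is routine bookkeeping.
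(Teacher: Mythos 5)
First, a framing point: the statement you are proving is labeled a \emph{conjecture} in the paper --- the author supplies no proof, and explicitly says only that it is ``inspired by'' the inequality in Eq.~\ref{impurebound}. So your overall strategy (observe that Eq.~\ref{impurebound} with $\ell=1$ \emph{is} the conjectured bound, then try to force every impure code into that mold) is exactly the route the author gestures at and evidently could not close. Your reductions are mostly sound as far as they go: the weight-1 shortening and its lifting inequality check out, the algebraic comparison $2^{\ell-1}\bigl(3(n-2\ell)+1\bigr)\geq 3(n-2)+1$ does hold for all $(n,\ell)$ with $n\geq 2\ell$ outside a short explicit list (roughly $(n,\ell)\in\{(4,2),(5,2),(6,3),(8,4),(10,5)\}$), and the $n<2\ell$ regime forces $k<n/2$, which settles all $n\geq 10$ there.

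The genuine gap is the final clause, ``handled by direct enumeration \dots together with Grassl's table.'' The residual cases all have $n\leq 10$, but they are precisely the ones where Eq.~\ref{impurebound} becomes vacuous ($3(n-2\ell)+1=1$) and where the conjectured inequality actually \emph{fails numerically} for parameters that Grassl's table does not exclude. Concretely: a hypothetical impure $\left[\!\left[8,3,3\right]\!\right]$ code with $\ell\geq 4$ independent weight-2 stabilizer elements would violate the conjecture ($19>2^{4}$) yet is untouched by every inequality in your chain; ruling it out requires the full classification of $\left[\!\left[8,3,3\right]\!\right]$ codes, which the paper only possesses via private communication, not via the code table. Likewise the cases $\left[\!\left[9,4,3\right]\!\right]$ and $\left[\!\left[10,5,3\right]\!\right]$ with $\ell=n-k$ need nonexistence results that are linear-programming facts, not consequences of your counting argument. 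Note that the paper's own Theorem~\ref{qdshammingthm} never confronts these configurations because its bound $4n-k+1\leq 2^{n-k}$ is weaker --- e.g.\ it \emph{permits} an impure $\left[\!\left[8,3,3\right]\!\right]$ code ($30\leq 32$) --- so the analogy with its closing argument does not carry the load you assign to it. You correctly flag the dense-weight-2 regime as ``the main obstacle,'' but you do not overcome it; as written this is a reduction of the conjecture to a finite but nontrivial set of classification problems, not a proof.
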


\section{Conclusion}

In this paper we extend the notion of QDS codes from stabilizer codes to subsystem codes. Since subsystem codes have fewer stabilizer generators than equivalent stabilizer codes derived from gauge fixing, they can be encoded in shorter classical SM codes. Additionally, while the gauge-fixed stabilizer codes often have a larger number of low-weight stabilizer elements that result in syndrome bits with a lower $p_{err}$ than those of the subsystem code, they also have a larger number of high-weight stabilizer elements that negate this advantage.

We note that the SM codes we use were chosen because they were optimal classical codes, but they may not necessarily be optimal when viewed as SM codes, as the effective channel that needs to be protected against is not i.i.d unless all the non-identity stabilizer elements have the same weight, such as the Steane code investigated in \cite{Ashikhmin2020}. As such, one of the biggest open questions in designing QDS codes is how to choose a good SM code for a specific stabilizer code with stabilizer elements of varied weights.

In addition to QDS subsystem codes, we showed that every impure distance-3 stabilizer code has a choice of stabilizer generators that make it a QDS code requiring no extra measurements, as well as show that they obey the QDS variant of the Hamming bound. Intuitively, we might expect this to be because impure codes can more efficiently pack syndromes than pure codes, which seems to be true with regards to reserving syndromes for syndrome measurement errors. On the other hand, we show a new bound for impure distance-3 codes follows directly from this variant of the Hamming bound, and we conjecture an even stronfer bound, suggesting that when it comes to packing syndromes for data errors, that pure codes may be more efficient than impure codes, which is supported by recent results that all stabilizer codes with $d<127$ must obey the quantum Hamming bound. It would be interesting to see if tighter bounds on impure codes can be attained for $d>3$ or for nonbinary stabilizer codes.

\section*{Acknowledgements}
The author would like to thank Narayanan Rengaswamy for introducing him to the problem of syndrome measurement errors, as well as Ken Brown and Andrew Cross for fruitful discussions. This work was supported by the Office of the Director of National Intelligence, Intelligence Advanced Research Projects Activity through ARO Contract W911NF-16-1-0082.

\bibliographystyle{IEEEtran}

\end{document}